\definecolor{darkgreen}{rgb}{0,0.5,0}
\definecolor{darkblue}{rgb}{0,0,0.8}
\newcommand{\NH}{\mathcal{N}^{\LOC}} %normal neighborhood graph
\newcommand{\NCR}{\mathcal{N}^{\SL}}
\newcommand{\NT}{\mathcal{N}} 
\newcommand{\NB}{\widetilde{\mathcal{N}}}
\newcommand{\SSS}{\mathcal{S}}
\newcommand{\TG}{\mathcal{\tilde{H}}}
\newcommand{\set}[1]{\left\{#1\right\}}
\newcommand{\SETLOCAL}{\ensuremath{\mathsf{SET}\text{-}\mathsf{LOCAL}}}
\newcommand{\SL}{\ensuremath{\mathsf{SL}}}
\newcommand{\LOCAL}{\ensuremath{\mathsf{LOCAL}}}
\newcommand{\LOC}{\ensuremath{\mathsf{LOC}}}
\newcommand{\eps}{\ensuremath{\varepsilon}}
 \definecolor{shadethmcolor}{rgb}{.94,.94,.94}   % Farbe des Hintergrundes 
    \definecolor{shaderulecolor}{rgb}{0.0,0.0,0.0}   % Farbe des Rahmens
    \newtheoremstyle{TheoremNum}
        {\topsep}{\topsep}              %%% space between body and thm
        {\itshape}                      %%% Thm body font
        {}                              %%% Indent amount (empty = no indent)
        {\bfseries}                     %%% Thm head font
        {.}                             %%% Punctuation after thm head
        { }                             %%% Space after thm head
        {\thmname{#1}\thmnote{ \bfseries #3}}%%% Thm head spec
    \theoremstyle{TheoremNum}
    \newenvironment{thm}[1][]{%
      %\definecolor{shadethmcolor}{rgb}{.9,.9,.95}%
      %\definecolor{shaderulecolor}{rgb}{0.0,0.0,0.4}%
      %\setlength{\shadeboxrule}{1.5pt}%
      \begin{thms}[#1]\hspace*{1mm}%
    }{\end{thms}}
    \newenvironment{lemma}[1][]{%
      %\definecolor{shadethmcolor}{rgb}{.9,.9,.95}%
      %\definecolor{shaderulecolor}{rgb}{0.0,0.0,0.4}%
      %\setlength{\shadeboxrule}{1.5pt}%
      \begin{lems}[#1]\hspace*{1mm}%
    }{\end{lems}}
    \newenvironment{corr}[1][]{%
      %\definecolor{shadethmcolor}{rgb}{.9,.9,.95}%
      %\definecolor{shaderulecolor}{rgb}{0.0,0.0,0.4}%
      %\setlength{\shadeboxrule}{1.5pt}%
      \begin{cors}[#1]\hspace*{1mm}%
    }{\end{cors}}
\theoremstyle{definition}
\renewcommand{\phi}{\varphi}
\newcommand{\dcup}{\ensuremath{\mathaccent\cdot\cup}}
\newcommand{\comment}[1]{}
\newcommand{\kommentar}[1]{}
\renewcommand{\epsilon}{\varepsilon}
\renewcommand{\theta}{\vartheta}
\renewcommand{\tilde}{\widetilde}
\definecolor{light-gray}{gray}{0.60}
\newcommand{\bigO}{\mathcal{O}}
\newcommand{\logstar}{\log^{*}}
\title{Polynomial Lower Bound for Distributed Graph Coloring in a Weak LOCAL Model}
\author{
  \normalsize Dan Hefetz\\
  \normalsize Ariel University, Israel\\
  \small\texttt{danny.hefetz@gmail.com}\\
  \\
  \normalsize Yannic Maus\\
  \normalsize University of Freiburg, Germany\\
  \small \texttt{yannic.maus@cs.uni-freiburg.de}
  \and
  \normalsize Fabian Kuhn\\
  \normalsize University of Freiburg, Germany\\
  \small \texttt{kuhn@cs.uni-freiburg.de}\\
  \\
  \normalsize Angelika Steger\\
  \normalsize ETH Zurich, Switzerland\\
  \small \texttt{steger@inf.ethz.ch}
}
\date{}
\begin{document}

\maketitle
\thispagestyle{empty}

\begin{abstract}
  \vspace*{-6mm}
  \noindent\textbf{Abstract:}
We show an $\Omega\big(\Delta^{\frac{1}{3}-\frac{\eta}{3}}\big)$ lower bound  on the runtime of any deterministic distributed $\bigO\big(\Delta^{1+\eta}\big)$-graph coloring algorithm in a weak variant of the \LOCAL\ model.

In particular, given a network graph \mbox{$G=(V,E)$}, in the weak \LOCAL\ model nodes communicate in synchronous rounds and they can use unbounded local computation. We assume that the nodes have no identifiers, but that instead, the computation starts with an initial valid vertex coloring. A node can \textbf{broadcast} a \textbf{single} message of \textbf{unbounded} size to its neighbors and receives the \textbf{set of messages} sent to it by its neighbors. That is, if two neighbors of a node $v\in V$ send the same message to $v$, $v$ will receive this message only a single time; without any further knowledge, $v$ cannot know whether a received message was sent by only one or more than one neighbor.

Neighborhood graphs have been essential in the proof of lower bounds for distributed coloring algorithms, e.g., \cite{linial92,Kuhn2006On}.  Our proof analyzes the recursive structure of the neighborhood graph of the respective model to devise an $\Omega\big(\Delta^{\frac{1}{3}-\frac{\eta}{3}}\big)$ lower bound on the runtime for any deterministic distributed $\bigO\big(\Delta^{1+\eta}\big)$-graph coloring algorithm.

Furthermore, we hope that the proof technique improves the understanding of neighborhood graphs in general and that it will help towards finding a lower (runtime) bound for distributed graph coloring in the standard \LOCAL\ model.
Our proof technique works for one-round algorithms in the standard \LOCAL\ model and provides a simpler and more intuitive proof for an existing $\Omega(\Delta^2)$ lower bound proven in \cite{stacs09}. This proof also extends to one-round $d$-defective coloring algorithms. We show that any one round $d$-defective color reduction algorithm in the standard \LOCAL\ model needs $\Omega(\Delta^2/(d+1)^2)$ colors if the input graph is colored with sufficiently many colors.
\end{abstract}

\setcounter{page}{0}

\clearpage
\section{Introduction}
\label{sec:intro}

\vspace*{-2mm}
In the distributed message passing model, an $n$-node communication network is represented as a graph \mbox{$G=(V,E)$}. Each node hosts a processor and processors communicate through the edges of $G$. In the standard \LOCAL\ model, time is divided into synchronous rounds and in each round, simultaneously, each node $v\in V$ performs an unbounded amount of local computations, sends a single message of unbounded size to each of its neighbors and receives the messages sent to it by its neighbors.  The time complexity of an algorithm is measured by the total number of rounds.

This paper deals with lower bounds on the time complexity of distributed graph coloring algorithms. A $c$-(vertex)-coloring of a graph $G=(V,E)$ is a function $\phi: V\rightarrow \{1,\ldots,c\}$ such that $\phi(u)\neq \phi(v)$ for all $\{u,v\}\in E$. Coloring a graph with the minimum number of colors is one of Karp's 21 NP-complete problems \cite{Karp1972} and the problem is even hard to approximate within a factor $n^{1-\eps}$ for any constant $\eps>0$ \cite{Zuckerman07}. A simple centralized greedy coloring algorithm which sequentially colors the nodes with the smallest available color guarantees a coloring with at most $\Delta+1$ colors, where $\Delta$ denotes the maximum degree of the graph. In the distributed setting, one is usually interested in competing with this greedy algorithm and to therefore find a coloring with $\Delta+1$ or more colors \cite{barenboimelkin_book}.

In this paper we consider deterministic \emph{color reduction algorithms}, where before the start of the algorithm the graph is equipped with an $m$-coloring $\psi: V\rightarrow \{1,\ldots,m\}$ (usually $m\gg\Delta$). Apart from their initial color, nodes are indistinguishable and therefore, in particular, nodes do not have unique IDs.  However, unique IDs in the range $\{1,\ldots, N\}$ for some $N\geq n$ are a special instance of the problem because they form an $N$-coloring. All recent deterministic coloring algorithms (e.g., \cite{barenboimelkin_book,barenboim15,fraigniaud15}) begin with the seminal $\bigO(\logstar n)$-round algorithm by Linial which computes an $\bigO(\Delta^2)$-coloring\footnote{The function $\logstar x$ denotes the number of iterated logarithms needed to obtain a value at most $1$, that is,\\ $\forall x\leq 1: \logstar x=0,\ \forall x>1: \logstar x = 1 + \logstar\log x$.} \cite{linial92}. Afterwards none of the algorithms make use of the unique IDs again and even Linial's algorithm does not require unique IDs but only an initial coloring of the nodes, i.e., the algorithms fit in the framework of color reduction algorithms. A lower bound for color reduction algorithms is thus almost as relevant as a lower bound for unique IDs.

More specifically, we consider color reduction algorithms in a weak variant of the standard \LOCAL\ model, which we name the \SETLOCAL\ model. In each round, each node can send an arbitrarily large message to its neighbors. However, instead of receiving one message from each neighbor, each node only receives the set of messages sent to it by its neighbors. %in a given round.
That is, if two or more neighbors send the same message to a node $u$, $u$ only receives this message once.\footnote{A similar model, but for completely anonymous graphs, has previously been studied in \cite{hella15}.} Note that when assuming unique IDs, there is no difference in power between the \SETLOCAL\ model and the standard \LOCAL\ model. Every node can just add its ID to all its messages and each node can then always easily distinguish between the messages sent to it by different neighbors. However when considering color reduction algorithms, neighbors with the same inital color might send the same message even when including their color or any other local knowledge in their messages.

\vspace*{-1mm}
\paragraph{Contributions:}
As our main result, we prove a polynomial (in the maximum degree $\Delta$) lower bound on the time required by color reduction algorithms in the \SETLOCAL\ model (for a formal definition of color reduction and of the \SETLOCAL\ model, see Section \ref{sec:model}). Formally, we prove the following main theorem.

\vspace*{-1mm}
\begin{thm}[Color Reduction Lower Bound]\label{thm:lowerbound}
  Let $0\leq \eta < 1$ and $C>0$ be two constants and assume that $m\geq 2C\Delta^{1+\eta}$. Any deterministic color reduction algorithm which, given an initial $m$-coloring, computes a coloring with at most $C\Delta^{1+\eta}$ colors in graphs with maximum degree at most $\Delta$ in the \SETLOCAL\ model requires $\Omega\big(\Delta^{\frac{1-\eta}{3}}\big)$ rounds.

\smallskip

\noindent Thus, in particular, any color reduction algorithm for computing a $(\Delta+1)$-coloring needs $\Omega\big(\Delta^{\frac{1}{3}}\big)$ rounds.
\end{thm}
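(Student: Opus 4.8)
The plan is to use the neighborhood-graph method of Linial~\cite{linial92}, adapted to the set semantics of the $\SETLOCAL$ model, and to extract a \emph{polynomial} bound from the recursive structure of that graph. First I would pin down what a node can deduce after $t$ rounds: since messages and local computation are unbounded, we may assume every node always forwards its whole current state, so the state of a node $v$ after round $t$ is the pair $\bigl(s_{t-1}(v),\,\{\,s_{t-1}(u):u\sim v\,\}\bigr)$, whose second component is a \emph{set} of at most $\Delta$ substates, not a multiset. This single feature is the entire difference from the $\LOCAL$ model and is exactly what the proof must exploit. I then define the neighborhood graph $\mathcal{N}_t$: its vertices are all states that occur after $t$ rounds in some legal input (a graph of maximum degree $\le\Delta$ carrying a proper $m$-coloring), with $X\sim Y$ iff some legal input realizes an edge whose endpoints have states $X$ and $Y$. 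By construction $\mathcal{N}_0=K_m$, so $\chi(\mathcal{N}_0)=m$, and $\mathcal{N}_t$ has a recursive description in terms of $\mathcal{N}_{t-1}$. A deterministic $t$-round color-reduction algorithm is precisely a map from states-after-$t$-rounds to colors, and it is correct on all legal inputs iff the induced map on $V(\mathcal{N}_t)$ is a proper coloring; hence such an algorithm using at most $c$ colors exists iff $\chi(\mathcal{N}_t)\le c$. So it suffices to prove $\chi(\mathcal{N}_t)>C\Delta^{1+\eta}$ whenever $t=o\bigl(\Delta^{(1-\eta)/3}\bigr)$.

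The heart of the argument is a one-round ``peeling'' estimate: from a proper $c$-coloring of $\mathcal{N}_t$ one constructs a proper coloring of $\mathcal{N}_{t-1}$ with only $c+O\bigl((c^{2}/\Delta)^{2/3}\bigr)$ colors, that is, $\chi(\mathcal{N}_t)\ge\chi(\mathcal{N}_{t-1})-O\bigl((\chi(\mathcal{N}_{t-1})^{2}/\Delta)^{2/3}\bigr)$ (or some variant with the same effect). To prove it I would isolate, inside $\mathcal{N}_{t-1}$, the self-similar substructure formed by the many ``copies'' of each $(t-1)$-state that sit inside the $t$-states, and argue --- crucially using the set semantics, which forbid a node from counting how many neighbors carry a fixed substate --- that the only gain a $t$-round coloring can have over a $(t-1)$-round coloring of this substructure is what one color-reduction round contributes at the top level of the recursion. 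That single round is then controlled by an extremal, cover-free-family / locality-sensitive-coloring type argument, the same tool behind the one-round $\Omega(\Delta^2)$ bound of \cite{stacs09}; specializing it to $t=1$ in the $\LOCAL$ model reproves that bound more simply, and it is the source of the exponent $2$ and hence of the $2/3$. I expect this estimate to break into two or three nested sub-bounds, which is ultimately where the cube in $\Delta^{(1-\eta)/3}$ comes from; this decomposition is the main obstacle, since one must show that the extra $(t-1)$ layers of blurred history cannot be leveraged beyond the single fixed budget while simultaneously proving the matching top-level extremal bound.

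Granting the peeling estimate, the remainder is bookkeeping. Iterating it $t$ times from $\chi(\mathcal{N}_0)=m$ --- treating the recursion as the differential inequality $\chi'\gtrsim-(\chi^{2}/\Delta)^{2/3}$ --- yields $\chi(\mathcal{N}_t)^{-1/3}-m^{-1/3}=O\bigl(t\,\Delta^{-2/3}\bigr)$. Hence $\chi(\mathcal{N}_t)\le C\Delta^{1+\eta}$ forces $c^{-1/3}-m^{-1/3}=O\bigl(t\,\Delta^{-2/3}\bigr)$ with $c=C\Delta^{1+\eta}$, and the hypothesis $m\ge 2C\Delta^{1+\eta}$ makes the left-hand side $\Omega(c^{-1/3})$, so $t=\Omega\bigl(\Delta^{2/3}(C\Delta^{1+\eta})^{-1/3}\bigr)=\Omega\bigl(\Delta^{(1-\eta)/3}\bigr)$. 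Taking $\eta=0$ and $C=1$ gives the stated $\Omega(\Delta^{1/3})$ bound for computing a $(\Delta+1)$-coloring.
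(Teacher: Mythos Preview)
Your high-level framework is right, and your final bookkeeping is exactly what the paper does: reduce to a lower bound on the chromatic number of the $r$-round neighborhood graph and then extract $r=\Omega(\Delta^{(1-\eta)/3})$ from that bound. The arithmetic you run at the end is correct.

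The gap is the ``peeling estimate''. You state it but do not prove it, and the paper does not prove anything of that form either; its argument is structurally different, and the difference matters. Working directly with $\mathcal{N}_t$ (the paper's $\NCR_t$) runs into a consistency constraint you do not mention: a vertex of $\NCR_{t}$ is a pair $(x,A)$ with $x\in\NCR_{t-1}$ and $A\subseteq\Gamma(x)$, \emph{and} $A$ must contain a witness for every ``type'' of $x$ (formally $z(A)=R(x)$). This type-matching requirement is exactly what blocks a direct inductive construction of an uncolored vertex in $\NCR_t$: when you try to add elements to $A$ to kill the remaining color classes, you lose control over whether the resulting $(x,A)$ is still a legal vertex. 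The same obstruction bites any direct proof of your peeling inequality on $\NCR_t$, and your sketch gives no mechanism for handling it.

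The paper circumvents this by passing to an auxiliary graph $\NT_r(m,D)$ in which the type constraint is \emph{dropped}. On $\NT_r$ one proves a clean inductive statement (Lemma~\ref{lemma:cliqueConstruction}): given any $c$-coloring of $\NT_r$, an ``uncolored'' clique of size $p+d$ in $\NT_i$ yields an uncolored clique of size $p$ in $\NT_{i+1}$, provided $p+d-1+c/d\le D$. Iterating from a large uncolored clique in $\NT_0=K_m$ with $d=D/(2r)$ gives $\chi(\NT_r(m,D))>D^2/(4r)$. The bound is then transported to $\NCR_r$ via graph homomorphisms $\NT_r(m,D)\to\NB_r(m,(r{+}1)D)\to\NCR_r(m,(r{+}1)D)$; the second homomorphism is where the \SETLOCAL\ semantics are actually used (one ``fills up'' the missing types, which is legal precisely because repeated neighbor states collapse to a set). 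This detour costs a factor of $r$ in the degree, so $D\approx\Delta/r$, and that is the true source of the cube: one gets $\chi(\NCR_r(m,\Delta))>\Theta(\Delta^2/r^3)$. Your proposal correctly anticipates the final exponent but misses the mechanism (auxiliary graph plus homomorphism) that produces it; without that mechanism the core lemma is not established.
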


Note that the theorem in particular implies that the time required for computing a $\Delta^{2-\eps}$-coloring for any constant $\eps>0$ is at least polynomial in $\Delta$ when using color reduction algorithms in the \SETLOCAL\ model. 

In order to establish that there are non-trivial color reduction algorithms in the \SETLOCAL\ model, we also show that an existing distributed coloring algorithm from \cite{Kuhn2006On} works in this setting. The discussion of the following theorem appears in Section \ref{sec:UpperBounds}.

\vspace*{-1mm}
\begin{thm}[Color Reduction Upper Bound]\label{thm:upperbound}
  In graphs with maximum degree at most $\Delta$ and an initial $m$-coloring, there is a deterministic distributed color reduction algorithm in the \SETLOCAL\ model which computes a $(\Delta+1)$-coloring in $\bigO(\Delta\log\Delta + \logstar m)$ rounds.
\end{thm}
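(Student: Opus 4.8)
The plan is to take the known $\bigO(\Delta\log\Delta + \logstar m)$-round $(\Delta+1)$-coloring algorithm of Kuhn and Wattenhofer \cite{Kuhn2006On} — which works in the standard \LOCAL\ model — and argue that it can be simulated, with only constant overhead, in the weaker \SETLOCAL\ model. The key observation is structural: that algorithm is a \emph{locally iterative} color reduction procedure in which, in each round, a node's new color depends only on its own current color together with the \emph{set} of colors currently held by its neighbors, and never on the multiplicity of those colors nor on which particular neighbor holds which color. Concretely, the Kuhn--Wattenhofer scheme reduces an $m$-coloring to a $(\Delta+1)$-coloring by repeatedly applying a color-reduction step: a node of color $c$ keeps its color if $c$ is one of the $\Delta+1$ smallest ``legal'' values, and otherwise recolors itself to the smallest color not appearing among its neighbors. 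In either case the only information required from the neighborhood is the \emph{set} $\{\psi(u) : u \in N(v)\}$, which is exactly what a node receives in one \SETLOCAL\ round if every node broadcasts its current color.

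First I would state precisely, as a lemma, the simulation principle: if an algorithm $\mathcal{A}$ in the standard \LOCAL\ model has the property that in each round every node's message and state-update function depend on the received data only through the \emph{set} of received messages (equivalently, $\mathcal{A}$ is invariant under duplicating or permuting the messages arriving at a node), then $\mathcal{A}$ runs unchanged, round for round, in the \SETLOCAL\ model. This is immediate from the model definitions: the \SETLOCAL\ model delivers exactly the set of messages, so a node running such an $\mathcal{A}$ computes the same transition it would have in \LOCAL. Second, I would recall the Kuhn--Wattenhofer algorithm in enough detail to verify this invariance round by round — the point being that throughout the algorithm every node broadcasts (a single message consisting of) its current color, and its recoloring rule is ``pick the smallest color not in the received set, unless your color is already small enough,'' which is manifestly a function of the set of neighbor colors and the node's own color only. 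Third, I would invoke the runtime analysis from \cite{Kuhn2006On}: starting from an $m$-coloring one first reaches an $\bigO(\Delta^2)$-coloring in $\bigO(\logstar m)$ rounds (this initial Linial-style step must itself be checked for set-invariance, which again holds since Linial's step also uses only the set of neighbor colors), and then $\bigO(\Delta\log\Delta)$ further rounds of the halving-style reduction bring the number of colors down to $\Delta+1$. Combining gives the claimed $\bigO(\Delta\log\Delta + \logstar m)$ bound.

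The main obstacle — and it is a mild one — is making sure that \emph{every} sub-routine used in the cited algorithm genuinely has the set-invariance property, including the initial fast color reduction from $m$ to $\mathrm{poly}(\Delta)$ colors, and that no step secretly relies on a node receiving distinct messages from distinct neighbors (for instance, some implementations of Linial's algorithm exploit the fact that a node sees an ordered tuple or a multiset of neighbor colors). I would handle this by choosing, for each sub-routine, the variant whose recoloring rule is phrased purely in terms of the \emph{set} of neighbor colors — which is possible precisely because all these algorithms compute, at each node, the smallest color (or an element of a set cover / small family) avoiding the neighbor colors, a quantity that depends only on the set. Once that is verified, the theorem follows with no change to the round complexity, establishing that non-trivial color reduction is possible in the \SETLOCAL\ model and thereby showing the lower bound of Theorem \ref{thm:lowerbound} is not vacuous.
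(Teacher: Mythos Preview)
Your proposal is correct and follows essentially the same approach as the paper: verify that both Linial's $\bigO(\Delta^2)$-coloring step and the Kuhn--Wattenhofer color reduction scheme depend, in each round, only on a node's own color and the \emph{set} of neighbor colors, and hence run unchanged in the \SETLOCAL\ model. The only minor slip is your paraphrase of the Kuhn--Wattenhofer step (nodes with large colors do not simply pick the globally smallest free color but rather a free color from a disjoint range determined by their current color), but this does not affect the argument since that rule is still set-invariant.
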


Furthermore, we hope that the proof technique of the lower bound improves the understanding of neighborhood
graphs in general and that it will help towards finding a lower (runtime) bound for distributed graph
coloring in the standard \LOCAL\ model. E.g., the same proof technique works for one-round coloring algorithms in the standard \LOCAL\ model and provides a simpler and more intuitive proof for the existing lower bound in \cite{stacs09}. A slight modification of the proof yields a previously unknown lower bound for one-round $d$-defective coloring algorithms. In a $d$-defective coloring each color class induces a graph with maximum degree $d$ and we obtain the following theorem which is proven in Section \ref{sec:LowerBoundWeakColoring}.
\begin{thm}\label{thm:DefectiveoneRound}
  For all $\Delta\geq 2, d\geq 0$ any one-round $d$-defective color reduction algorithm in the standard \LOCAL\ model needs $\Omega(\Delta^2/(d+1)^2)$ colors if $m\geq 2\Delta^2$.
\end{thm}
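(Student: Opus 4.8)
The plan is to reduce the statement to the case $d=0$, which is precisely the one-round $\Omega(\Delta^2)$ lower bound for proper color reduction --- the (simpler) reproof of the bound of \cite{stacs09} around which this section is built. So the proof has two parts: (a) establish the $d=0$ case, and (b) give a reduction that turns any one-round $d$-defective color reduction algorithm for maximum degree $\Delta$ using $c$ colors into a one-round \emph{proper} color reduction algorithm for maximum degree $\Delta':=\lfloor\Delta/(d+1)\rfloor$ using at most $c$ colors. Plugging (b) into (a) then forces $c=\Omega(\Delta'^2)=\Omega\big(\Delta^2/(d+1)^2\big)$; the ranges $\Delta<2(d+1)$ (where $\Delta^2/(d+1)^2=\bigO(1)$ and $c\geq 1$ already suffices) and $d\geq\Delta$ are trivial and I would dispatch them first. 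Morally, (b) says that a $d$-defective coloring is exactly as hard as a proper coloring with all neighborhood sizes scaled down by $d+1$, which is the sense in which the defective bound is a ``slight modification'' of the $d=0$ one.

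For part (b) I would use the following ``twin blow-up''. Given a one-round $d$-defective algorithm $\mathcal{A}$ for maximum degree $\Delta$ with $c$ colors and initial colorings from $[m]$, $m\geq 2\Delta^2\geq 2\Delta'^2$, define a one-round algorithm $\mathcal{A}'$ for maximum degree $\Delta'$ over the same palette by having a node behave as $\mathcal{A}$ would if each of its at most $\Delta'$ neighbors had been replaced by $d+1$ indistinguishable copies sending the same message: on a received collection $\nu$ of messages, $\mathcal{A}'$ outputs $\mathcal{A}(\nu^{(d+1)})$, where $\nu^{(d+1)}$ is $\nu$ with every message repeated $d+1$ times. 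Since $\nu^{(d+1)}$ corresponds to a degree of at most $(d+1)\Delta'\leq\Delta$, this is well defined, and $\mathcal{A}'$ uses only colors used by $\mathcal{A}$. The heart of (b) is the claim that $\mathcal{A}'$ is a valid \emph{proper} color reduction algorithm. To see this, take any graph $H$ with maximum degree at most $\Delta'$ and a valid $m$-coloring, and let $G$ be obtained from $H$ by replacing each vertex $v$ by $d+1$ copies carrying $v$'s color and joining all copies of $u$ to all copies of $v$ whenever $\{u,v\}\in E(H)$. Then $G$ has maximum degree at most $(d+1)\Delta'\leq\Delta$ and inherits the valid $m$-coloring; each copy of $v$ receives exactly the $(d+1)$-fold blow-up of $v$'s view in $H$, and so $\mathcal{A}$ colors it with the color $\mathcal{A}'$ assigns to $v$ in $H$; finally, if some edge $\{u,v\}\in E(H)$ were monochromatic under $\mathcal{A}'$, then in $G$ each of the $d+1$ copies of $v$ would have the $d+1$ copies of $u$ as same-colored neighbors, so $\mathcal{A}$ would not be $d$-defective on $G$ --- a contradiction. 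Hence $\mathcal{A}'$ properly colors $H$, and part (a) applied with parameter $\Delta'$ finishes the argument.

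The expected main obstacle is part (a): that one round cannot drop below $\Omega(\Delta^2)$ colors \emph{even from only $2\Delta^2$ initial colors}. Here I would model a one-round algorithm as a coloring of the one-round neighborhood graph, whose vertices are the possible views --- a node's own color together with the collection of colors of its at most $\Delta$ neighbors --- and in which two views are adjacent exactly when they can occur at the two ends of an edge of a legal input graph; concretely, $(c_0,B)\sim(c_0',B')$ iff $c_0\neq c_0'$, $c_0'\in B$ and $c_0\in B'$, where $B,B'$ are the respective neighbor-color sets, realizability being arranged by padding each endpoint with fresh neighbors. A one-round proper color reduction algorithm is then exactly a proper coloring of this graph, so (a) amounts to showing it has chromatic number $\Omega(\Delta^2)$ when $m\geq 2\Delta^2$. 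This is the delicate step: the subgraph of views of degree exactly $\Delta$ is vertex-transitive with clique number and fractional chromatic number only $\Theta(\Delta)$, so neither a clique nor a $|V|/\alpha$-type averaging argument can yield more than $\Omega(\Delta)$. Closing the remaining factor-$\Delta$ gap requires genuinely exploiting the global, recursive structure of the neighborhood graph --- the machinery developed for Theorem~\ref{thm:lowerbound}, specialized to a single round --- and is where I expect essentially all the difficulty of the proof to lie.
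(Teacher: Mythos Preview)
Your reduction is correct and takes a genuinely different route from the paper. The paper proves the defective bound by directly adapting the source argument: it introduces a notion of $(d,W)$-source (an $x$ such that for every $(d{+}1)$-element $B\subseteq W\cap\Gamma(x)$ some $(x,A)$ in the color class satisfies $B\subseteq A$), shows that each color class admits at most $d{+}1$ such sources inside any clique of $\NH_0$, and then reruns the pigeonhole-plus-padding construction of the proper case with every occurrence of ``$1$'' replaced by ``$d{+}1$''. You instead reduce black-box to the $d=0$ case via the $(d{+}1)$-twin blow-up, which is clean and correct: twins of the same vertex are non-adjacent so the $m$-coloring stays proper, every copy of $v$ in $G$ sees exactly the $(d{+}1)$-fold multiset of $v$'s view in $H$, and a monochromatic edge of $H$ would give each copy of one endpoint $d{+}1$ same-colored neighbors in $G$.

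What each approach buys: your reduction is more modular --- once the $d=0$ bound is in hand nothing has to be reopened, and it makes the ``defect $d$ $\approx$ degree scaled down by $d{+}1$'' heuristic literal. The paper's direct argument keeps the defective proof self-contained and isolates the $d$-source primitive, which is arguably the right object if one later tries to combine defect with multiple rounds (where vertex blow-ups interact less cleanly with the recursive neighborhood structure, and where the correspondence between algorithms and colorings of $\NH_r$ already breaks, as the paper notes). One minor correction to your last paragraph: part~(a) does not require the full machinery behind Theorem~\ref{thm:lowerbound}; it is exactly the short orientation/pigeonhole argument of Theorem~\ref{thm:oneRound}, so the difficulty you anticipate there is already dispatched in about a page.
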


\vspace*{-5mm}
\paragraph{Related Work:}
%Already three decades ago, distributed 
Distributed coloring has been identified as one of the prototypical problems to understand the problem of breaking symmetries in distributed and parallel systems. %and the problem has since been studied extensively. 
In the following, we discuss the work which is most relavant in the context of this paper. For a more general overview of the research on distribted coloring, we refer to the monograph of Barenboim and Elkin \cite{barenboimelkin_book}.

In a classic paper, Cole and Vishkin showed that a ring network can be $3$-colored in $\bigO(\logstar N)$ synchronous rounds, where $N$ is the size of the space of possible node IDs \cite{cole86}.\footnote{The algorithm of \cite{cole86} was described as a PRAM algorithm, however it can be directly applied in the usual distributed setting.} The algorithm was generalized in \cite{goldberg88} to a distributed $\bigO(\Delta^2 + \logstar N)$-round algorithm for $(\Delta+1)$-coloring graphs with maximum degree at most $\Delta$. Most relevant in the context of this work is the seminal paper by Linial \cite{linial92}, where he in particular shows that the $\bigO(\logstar N)$ algorithm of \cite{cole86} is asymptotically optimal and that in $\bigO(\logstar N)$ rounds, it is possible to color arbitrary graphs with $\bigO(\Delta^2)$ colors. While there has been a lot of progress on developing distributed coloring algorithms, the lower bound of \cite{linial92} is still the best known time lower bound for the standard distributed coloring problem. All the above algorithms are deterministic and at the core, they are all based on iterative color reduction schemes where a given valid vertex coloring is improved in a round-by-round manner. A different approach is taken  in \cite{awerbuch89,panconesi95}, where it is shown how to compute a $(\Delta+1)$-coloring in $2^{\bigO(\sqrt{\log n})}$ rounds ($n$ is the number of nodes) based on first computing a decomposition of the network into clusters of small diameter. When measuring the time as a function of $n$, this is still the best known deterministic distributed $(\Delta+1)$-coloring algorithm for general graphs.

There has been significant recent progress on developing faster deterministic distributed coloring algorithms, particularly for graphs with moderately small maximum degree $\Delta$. In \cite{Kuhn2006On}, it was shown that combined with a simple interative color reduction scheme, the algorithm of \cite{linial92} can be turned into a $\bigO(\Delta\log\Delta + \logstar N)$-time $(\Delta+1)$-coloring algorithm. By decomposing a graph into subgraphs with small maximum degree, an improved time complexity of $\bigO(\Delta + \logstar N)$ was achieved in \cite{BEK15}. The basic ideas of \cite{BEK15} were extended and generalized in \cite{barenboim10}, where in particular it was shown that an $\bigO(\Delta^{1+o(1)})$-coloring can be computed in time $\bigO(\Delta^{o(1)} + \logstar N)$. The time complexity for $(\Delta+1)$-colorings was recently improved in \cite{barenboim15} and \cite{fraigniaud15}, where upper bounds of $\tilde{\bigO}(\Delta^{3/4}) + \logstar N$ and $\tilde{\bigO}(\sqrt{\Delta}) + \logstar N$ rounds were shown. Both algorithm also work for the more general list coloring problem.\footnote{The algorithm of \cite{fraigniaud15} works for an even more general conflict coloring problem and $\tilde{\bigO}$ ignores polylog factors in $\log \Delta$.}

 While the best deterministic algorithms for distributed $(\Delta+1)$-coloring have time complexities which are polynomial in $\Delta$ or exponential in $\sqrt{\log n}$, much faster randomized algorithms are known. Based on the distributed maximal independent set algorithm of \cite{alon86,luby86} and a reduction described in \cite{linial92}, %it has been known for a long time that
by using randomizatiion, a $(\Delta+1)$-coloring can be computed in $\bigO(\log n)$ rounds. This old result has recently been improved in \cite{barenboim12}, where it was shown that a $(\Delta+1)$-coloring can be computed in time $\bigO(\log\Delta) + 2^{\bigO(\sqrt{\log\log n})}$ and in \cite{hsinhao_coloring}, where the current best time bound of $\bigO(\sqrt{\log\Delta}) + 2^{\bigO(\sqrt{\log\log n})}$ was proven. Closing or understanding the gap between the distributed complexities of randomized and deterministic algorithms for $(\Delta+1)$-coloring and other basic symmetry breaking tasks is one of the main open problems in the area of distributed graph algorithms. Even though we are dealing with a weaker, non-standard communication model, we hope that the lower bound of the present paper provides a step in this direction. Note that for $\Delta$-coloring trees with max.\ degree at most $\Delta$, an exponential separation between randomized and deterministic algorithms has recently been shown in \cite{chang16}.

Although there has been steady progress on developing upper bounds for distributed coloring, much less is known about lower bounds. While by now there exist many distributed time lower bounds for related graph problems in the \LOCAL\ model (e.g., \cite{LLL_lowerbound,goeoes14_DISTCOMP,goeoes14_PODC,lowerbound,kuhn16_jacm}), the $\Omega(\logstar n)$ lower bound for coloring rings with $\bigO(1)$ colors by Linial \cite{linial92} is still the only time lower bound for the standard distributed coloring problem. Linial's lower bound is based on the fundamental insight that for a given $r\geq 1$, the minimum number of colors which any $r$-round coloring algorithm needs to use can be expressed as the chromatic number of a graph Linial names the \emph{neighborhood graph}. Linial then shows that the chromatic number of the $r$-round neighborhood graph for $n$-node rings is $\Omega(\log^{(2r)}n)$, where $\log^k x$ is the $k$-times iterated $\log$-function. For a more detailed discussion of how to use neighborhood graphs for proving distributed coloring lower bounds, we refer to Section \ref{sec:neighborhoodgraphs}. Using neighborhood graphs, a combination of techniques of \cite{linial92} and \cite{alon10}  also shows that coloring $d$-regular trees with less than $o(\log d /\log\log d)$ colors requires $\Omega(\log d / \log\log d)$ rounds; \cite{Barenboim2010} uses this result to show that $\bigO(a)$-coloring graphs with arboricity $a$ takes $\Omega(\log(n)/\log(a))$ rounds. Further, in \cite{Kuhn2006On,stacs09}, neighborhood graphs were used to show that in a single round, when starting with an $m$-coloring with $m$ sufficiently large, in graphs with maximum degree at most $\Delta$, the number of colors cannot be reduced to fewer than $\Omega(\Delta^2)$ colors. Similar, slightly weaker results were before already proven in \cite{szegedy93}. In \cite{LLL_lowerbound}, it has been shown that coloring $d$-regular graphs with $d$ colors requires at least $\Omega(\log\log n)$ rounds. In addition, in \cite{greedycoloring}, it was shown that $\Omega(\log n / \log\log n)$ rounds are needed to compute a %specific kind of 
$(\Delta+1)$-coloring where in the end, each node has the smallest possible color which is consistent with the colors chosen by its neighbors.

\vspace*{-2mm}
\section{Model \& Problem Statement}
\label{sec:model}

\vspace*{-1mm}
\paragraph{Mathematical Notation:}
 For a graph $G=(V,E)$ and a node $v\in V$, $\Gamma_G(v)$ denotes the set of neighbors of $v$ in $G$. Sometimes we write $\Gamma(v)$ if the graph $G$ is clear from the context. Given a graph $G$, we use $\Delta(G)$ to denote the maximum degree of $G$ and $\chi(G)$ to denote the chromatic number of $G$ (i.e., the number of colors of a minimum valid vertex coloring). We sometimes abuse notation and identify a set of nodes $S$ of $G$ with the subgraph induced by $S$. For example, we might write $S\subseteq G$, where $S$ denotes a subset of the nodes of $G$ and also the subgraph induced by $S$. By $[m]$ we denote the set of integers $\{1,\ldots,m\}$.

\vspace*{-2mm}
\paragraph{The Color Reduction Problem:} In the distributed color reduction problem, we are given a network graph  $G=(V,E)$ of max.\ degree at most $\Delta$. Each node $v\in V$ is equipped with an initial color $\psi(v)\in [m]$ such that the coloring $\psi$ provides a valid vertex coloring of $G$. At the start, nodes can only be distinguished by their initial color and thus at the beginning, except for the value of their initial color, all nodes start in the same state. The goal of a color reduction algorithm is to compute a new color $\phi(v)$ for each $v\in V$ such that the coloring $\phi$ also provides a valid vertex coloring of $G$, but such that the colors $\phi(v)$ are from a much smaller range. We say that a color reduction algorithm computes a $c$-coloring of $G$ if $\phi(v)\in [c]$ for all $v\in V$.

\vspace*{-2mm}
\paragraph{Communication Model:}
 We work with an adapted version of the \LOCAL\ model \cite{linial92,peleg00}, which we call the \SETLOCAL\ model.  A communication network is modeled as an $n$-node graph \mbox{$G=(V,E)$}, where the nodes of $G$ can use unbounded local computation and communicate through the edges of $G$ in synchronized rounds. In each round, a node can \textbf{broadcast} a \textbf{single} message of \textbf{unbounded} size to its neighbors and each node receives the \textbf{set of messages} sent to it by its neighbors. That is, if two or more neighbors of a node $v\in V$ send the same message to $v$, $v$ will receive the message only a single time. Thus, without any further knowledge, $v$ cannot know whether a message was sent by only one or more than one neighbor. Note that a node which broadcasts a single message of arbitrary size can send different messages to different neighbors by indicating which part of the message is for which neighbor. However, to do so it is necessary that the node can already distinguish its neighbors by some property, e.g., by the use of (different) messages received from them previously. In \cite{hella15}, different weak variants of the \LOCAL\ model were studied for problems where the network nodes are completely anonymous without any initial labeling. The \SETLOCAL\ model corresponds to the $\mathsf{SB}$ model in the hierarchy of models discussed in \cite{hella15}.

When running a distributed color reduction algorithm in the \SETLOCAL\ model, we assume that all nodes are aware of the parameters $m$ and $\Delta$ and of the number of nodes $n$ of $G$. Note that since our main focus is proving a lower bound, this assumption only makes the results stronger.

\vspace*{-2mm}
\paragraph{The Role of Randomness:}
Generally, there is a large gap between the best known randomized and deterministic distributed coloring algorithms and understanding whether this large gap is inherent or to what extent it can be closed is one of the major open problems in the area of distributed graph algorithms. When considering color reduction algorithms as introduced above, randomness can only help if either an upper bound on $n$ is known or if the running time can depend on $n$. To see this, assume that we have a randomized color reduction algorithm which computes a $c(m,\Delta)$-coloring in $T(m,\Delta)$ rounds. To have an algorithm which cannot be derandomized trivially, the algorithm must either fail to terminate in $T(m,\Delta)$ rounds with positive probability $\eps>0$ or it must fail to compute a valid $c$-coloring with positive probability $\eps>0$. Let $G$ be a graph on which the algorithm fails in one of the two ways with a positive probability $\eps>0$. Consider a graph $H_k$ which consists of $k$ identical disjoint copies of $G$. As the randomness in the $k$ copies has to be independent, when running the algorithm, one of the $k$ copies fails with probability at least $1-(1-\eps)^k$. Note that the parameters $m$ and $\Delta$ are the same for the two graphs $G$ and $H_k$. For sufficiently large $k$, this failure probability becomes arbitrarily close to $1$.

\section{Neighborhood Graphs for Lower Bounds in Distributed Coloring}
\label{sec:neighborhoodgraphs}

Neighborhood graphs were introduced by Nati Linial in his seminal paper \cite{linial92} in which he uses them to derive his famous $\Omega(\logstar n)$ lower bound for $3$-coloring rings. Let us quickly recall his main ideas: In the \LOCAL\ model  there is no loss of generality if one assumes that an $r$-round algorithm first collects all data, which it can learn in $r$ rounds, and only then decides on its output. The data, which a synchronous $r$-round distributed algorithm running at a node $v$ can learn in this model, consists of the IDs and the topology of all nodes in distance at most $r$, except for edges between nodes in distance exactly $r$. This is called the $r$-view of a node and corresponds exactly to the knowledge a node obtains if every node forwards everything it knows (i.e., its current state) to all neighbors in every round, which it can do due to unbounded message size. If the number of IDs $n$, the maximum degree $\Delta$, and the number of rounds $r$ are fixed, there are finitely many $r$-views and an $r$-round $c$-coloring algorithm is a function from those $r$-views to $[c]$. Neighborhood graphs formalize the neighborhood relation between $r$-views. The neighborhood graph $\NH_r(n,\Delta)$ for the \LOCAL\ model has a node for each feasible $r$-view and there is an edge between two such nodes if the corresponding $r$-views can occur at neighboring nodes in some graph with $n$ nodes and maximum degree $\Delta$.

Neighborhood graphs are extensively useful when studying distributed graph coloring because any (correct) $r$-round $c$-coloring algorithm yields a $c$-coloring of the $r$-round neighborhood graph $\NH_r$ and vice versa, \cite{linial92,Kuhn2006On}. Therefore the existence of an $r$-round $c$-coloring algorithm reduces to the question whether the chromatic number of $\NH_r(n,\Delta)$ is smaller than or equal to $c$. Particularly, Linial showed   \mbox{$\chi\big(\NH_r{(n,2)}\big)\in\Omega\big(\log^{(2r)}n\big)$}  which yields his lower bound of $\Omega(\logstar n)$ rounds.

% In \cite{Kuhn2006On} Kuhn et al. showed that the chromatic number of the neighborhood graph for one round algorithms is the same for color reduction algorithms, starting with $n$ colors, and algorithms which use $n$ unique ids. In the same work he showed that in one round the number of colors cannot be reduced to less than $\Omega\left(\frac{\Delta^2}{\log(\Delta)}\right)$ if the inital coloring uses sufficiently many colors. Later this was improved to $\Omega\left(\Delta^2\right)$ by a more precise calculation \cite{stacs09}.

\subsection{Neighborhood Graphs in the \boldmath\SETLOCAL\ Model}

In the same way as in the \LOCAL\ model we obtain the data a node $v$ can learn in an $r$-round algorithm of the \SETLOCAL\ model if every node forwards its knowledge to all neighbors in every round. After $0$ rounds a node knows nothing but its own color, after one round it knows its own color and the \textbf{set of colors} of its neighbors, and so on. Definition \ref{def:rneighborhood} formalizes the data which a node can learn in $r$ rounds in the \SETLOCAL\ model. A node cannot detect cycles unless unique IDs are given (this holds in the \SETLOCAL\ model and in the standard \LOCAL\ model). The $r$-views are thus not formed by the actual topology of the neighborhood, but by the tree unfolding of the neighborhood. Thus for color reduction algorithms, w.l.o.g., we can restrict our attention to the case of trees. Consequently, all $r$-views can be considered as trees and we therefore define $r$-neighborhoods in the following way.

\begin{definitions}[$r$-Neighborhood]\label{def:rneighborhood}
Let $G=(V,E)$ be a tree with maximum degree at most $\Delta$ and an initial $m$-coloring $\psi:V\rightarrow \{1,\ldots,m\}$. We define
\vspace*{-1.5mm}
\begin{align*}
 \SSS^G_0(v)& :=\psi(v) & &\text{0-round view}.\\
\SSS^G_{r+1}(v) & :=\left(\SSS^G_r(v), \left\{\SSS^G_r(u) \mid u\in \Gamma_G(v)\right\}\right) & &\text{$(r+1)$-round view},
\end{align*}
where $r\geq 0$ and $\SSS^G_r(v)$ equals the data which a node $v\in V$ can learn in an $r$-round distributed algorithm (\emph{$r$-view of $v$}) in the \SETLOCAL\ model.
\end{definitions}

The $r$-view $\SSS^G_r(v)$ depends on the tree $G$, the coloring $\psi$, and the node $v$. However if we fix the number of initial colors $m$, the maximum possible degree $\Delta$, and the number of rounds $r$, the number of feasible $r$-views which can occur at any node in any tree with maximum degree $\Delta$ and any initial $m$-coloring is finite. The following definition adapts the neighborhood graphs of the \LOCAL\ model to the \SETLOCAL\ model. 
\begin{figure}
\vspace{-0.6cm}
		\centering
		\includegraphics[width=0.36\textwidth]{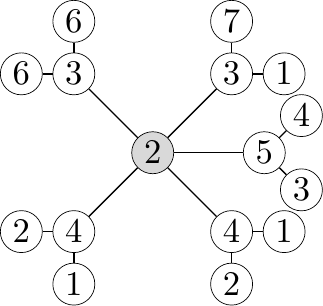}\hspace{1.5cm}
			\includegraphics[width=0.155\textwidth]{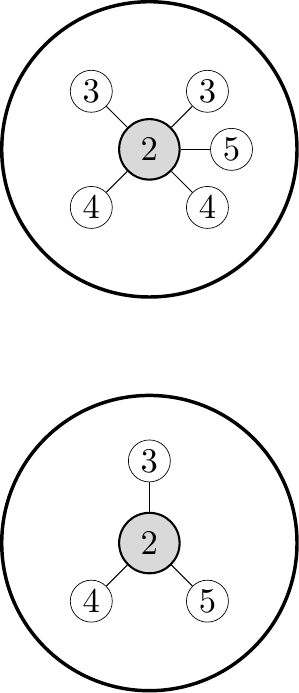}		
						\caption{\textbf{Left:} An extract of a tree graph. Images in Figure \ref{fig:neighborhoodGraphs2} correspond to views of the gray node for two rounds in different models. \textbf{Right top:} The 1-round view in the \SETLOCAL\ model of the gray node  in the left image.\newline
			\textbf{Right bottom:} The 1-round view in the \LOCAL\ model of the gray node  in the left image. 
						In  \mbox{\cite[Lemma 3.1]{Kuhn2006On} the authors prove that $\chi(\NH_1(m,D))=\chi(\NB_1(m,D))$ holds.}}	
\label{fig:setModelTree}
\end{figure}

\begin{definitions}[Neighborhood Graph in the \SETLOCAL\ Model]
Let $m$, $\Delta$,  and $r\geq 0$  be fixed integers.
Consider the following finite graph $\NCR_r(m,\Delta)$
\vspace*{-2mm}
\begin{eqnarray*}
V(\NCR_r) & := & \left\{\SSS_r^G(v) \,|\, \exists\ m\text{-colored tree $G$ s.t. } \Delta(G)\leq\Delta \text{ and } v\in V(G)  \right\},\\
E(\NCR_r) & := & \left\{\{\SSS_r^G(v),\SSS_r^G(u)\} \,|\, \exists\ m\text{-colored tree $G$ s.t. } \Delta(G)\leq\Delta \text{ and } \{u,v\}\in E(G)\right\}.
\end{eqnarray*}
\end{definitions}
Just as in the \LOCAL\ model, any $r$-round $c$-coloring algorithm in the \SETLOCAL\ model  with an initial $m$-coloring can be transformed into an equivalent algorithm in which every node first collects its $r$-neighborhood and then decides on its output. Such an algorithm is a function
 $\phi: V(\NCR_r)\rightarrow \{1,\ldots,c\}$ such that $\phi(x)\neq \phi(y)$ for all $\{x,y\}\in E(\NCR_r)$, that is, $\phi$ is a $c$-coloring of the graph $\NCR_r$.
\begin{lemma}
\label{lemma:algCorrespondence}
Any deterministic $r$-round distributed algorithm in the \SETLOCAL\ model, which correctly $c$-colors any intially $m$-colored graph with maximum degree $\Delta$, yields a feasible $c$-coloring of $\NCR_r(m,\Delta)$ and vice versa.
\end{lemma}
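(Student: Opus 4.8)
The statement is an equivalence between $r$-round distributed $c$-coloring algorithms in the \SETLOCAL\ model and feasible $c$-colorings of $\NCR_r(m,\Delta)$. The plan is to prove each direction by unwinding the definitions, relying on the key normalization already announced in the text: any $r$-round algorithm can be assumed, without loss of generality, to have every node first gather its full $r$-view $\SSS^G_r(v)$ (in the \SETLOCAL\ sense of Definition~\ref{def:rneighborhood}) and only then output a color. I would state and briefly justify this normalization first, since both directions hinge on it.

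\textbf{Normalization step.} First I would argue that an $r$-round \SETLOCAL\ algorithm $\mathcal{A}$ is equivalent to one of the above ``collect-then-decide'' form. The argument: if in each round every node broadcasts its entire current knowledge (its current state, which it can do because messages have unbounded size), then after $i$ rounds a node's knowledge is exactly $\SSS^G_i(v)$ — here one uses that a node receives the \emph{set} of messages from its neighbors, which is precisely the set $\{\SSS^G_{i-1}(u)\mid u\in\Gamma_G(v)\}$ appearing in the recursive definition, so the $r$-view captures everything learnable. Any message $\mathcal{A}$ would send in round $i$ is a function of the sender's round-$(i-1)$ knowledge, hence a function of that node's $\SSS^G_{i-1}$, so simulating the ``forward everything'' protocol dominates $\mathcal{A}$; the final output of $\mathcal{A}$ at $v$ is therefore a function $\phi$ of $\SSS^G_r(v)$ alone. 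I would also note the already-stated reduction that it suffices to consider trees $G$, since a node cannot distinguish $G$ from its tree unfolding, so the relevant views are exactly the vertices of $\NCR_r(m,\Delta)$.

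\textbf{Forward direction.} Given such a correct algorithm $\mathcal{A}$ with output function $\phi: V(\NCR_r)\to\{1,\dots,c\}$, I claim $\phi$ is a proper coloring of $\NCR_r$. Take any edge $\{x,y\}\in E(\NCR_r)$; by definition there is an $m$-colored tree $G$ with $\Delta(G)\le\Delta$ and an edge $\{u,v\}\in E(G)$ with $x=\SSS^G_r(u)$, $y=\SSS^G_r(v)$. Since $\mathcal{A}$ correctly $c$-colors $G$, the outputs at $u$ and $v$ differ, i.e.\ $\phi(x)=\phi(\SSS^G_r(u))\neq\phi(\SSS^G_r(v))=\phi(y)$; and both lie in $\{1,\dots,c\}$. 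Hence $\phi$ is a feasible $c$-coloring of $\NCR_r(m,\Delta)$.

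\textbf{Reverse direction.} Conversely, given a proper $c$-coloring $\phi$ of $\NCR_r(m,\Delta)$, define the algorithm in which every node $v$ of an input graph $G$ collects $\SSS^G_r(v)$ in $r$ rounds (via ``forward everything'') and outputs $\phi(\SSS^G_r(v))$ — note $\SSS^G_r(v)\in V(\NCR_r)$ by construction, and by the tree-unfolding remark we may treat $G$ as a tree so this is well defined. For any edge $\{u,v\}\in E(G)$ we have $\{\SSS^G_r(u),\SSS^G_r(v)\}\in E(\NCR_r)$, so properness of $\phi$ gives distinct outputs; thus the algorithm produces a valid $c$-coloring of every $m$-colored graph of maximum degree $\Delta$ in $r$ rounds.

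\textbf{Main obstacle.} The only genuinely delicate point is the normalization: carefully verifying that ``broadcast your whole state'' in the \SETLOCAL\ model yields exactly $\SSS^G_r(v)$ and dominates every other $r$-round protocol — in particular checking that receiving a \emph{set} (rather than a multiset or an indexed tuple) of neighbor messages loses nothing beyond what Definition~\ref{def:rneighborhood} already builds in, and that an algorithm cannot cleverly use the number of distinct messages received to learn more than the set itself. Everything after that is a direct definition-chase.
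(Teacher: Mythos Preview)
Your proposal is correct and follows essentially the same approach as the paper's proof: both directions are the straightforward definition-chase you describe, with the forward direction using the witnessing tree from the edge definition of $\NCR_r$ and the reverse direction having each node collect its $r$-view and apply the given coloring. The paper treats the normalization step more tersely (it is asserted in the text preceding the lemma rather than argued inside the proof), but the content is the same.
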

\begin{proof}
The following proof uses the same ideas as the proof of \cite[Lemma 3.1]{Kuhn2006On}.

Assume that we are given an $r$-round distributed algorithm in the \SETLOCAL\ model, which correctly $c$-colors any initially $m$-colored graph with maximum degree $\Delta$. This algorithm induces a $c$-coloring $\phi$ of the nodes of $\NCR_r(m,\Delta)$. Assume that this $c$-coloring is not proper, i.e., there are two nodes $x,y\in V(\NCR_r(m,\Delta))$ with $\phi(x)=\phi(y)$. By the definition of $\NCR_r(m,\Delta)$ we can construct an initially $m$-colored tree $G$ with maximum degree $\Delta$ where the $r$-round views $x$ and $y$ occur as $r$-round views of two adjacent nodes $u$ and $v$ of $G$. As a consequence the algorithm corresponding to the function $\phi$ assigns the same value to $u$ and $v$, which is a contradiction to its correctness.

For the other direction we need to prove that any $r$-round $c$-coloring of $\NCR_r(m,\Delta)$ implies an $r$-round $c$-coloring algorithm of any initially $m$-colored graph $G$ with maximum degree $\Delta$. For that purpose assume that we have a $c$-coloring $\phi$ of $\NCR_r(m,\Delta)$ which is known by all nodes of $G$. This is no problem because the graph $\NCR_r(m,\Delta)$ does only depend on $m$ and $\Delta$ and is independent of the structure of any particular graph $G$. Let each node $v$ collect its $r$-round view $\SSS_r^G(v)$ in $G$ in the \SETLOCAL\ model. By its definition $\NCR_r(m,\Delta)$ contains a node $x_v$ corresponding to the view of $v$; let $v$ select the color $\phi(x_v)$. If $u$ and $v$ are neighbors in $G$ their selected colors are different by the definition of $\NCR_r(m,\Delta)$.
\end{proof}
Due to Lemma \ref{lemma:algCorrespondence} a lower bound $\chi(\NCR_r(m,\Delta)>c$ on the chromatic number implies that there is no $r$-round color reduction algorithm in the \SETLOCAL\ model which can (correctly) $c$-color all initially $m$-colored graphs with maximum degree $\Delta$. 

\section{Lower Bound Proofs}
\label{sec:lowerBoundProof}
We begin with a lower bound of $\Omega(\Delta^2)$ on the number of colors for any one-round color reduction algorithm in the standard \LOCAL\ model (Section \ref{sec:OneRoundBound}), i.e., $\chi(\NH_1)\in \Omega(\Delta^2)$. This result was shown before in \cite{stacs09}, but our proof is much simpler and we believe that it is also instructive as it contains the core idea for the subsequent general lower bound proof for the \SETLOCAL\ model. 

Afterwards, the goal is to device a lower bound on the chromatic number of $\NCR_r$. For this purpose, we (recursively) define graphs $\NT_r$ and $\NB_r$. The recursive structure of the graph $\NT_r$ is simpler than the one of $\NCR_r$ such that the repetitive application of the ideas of Section \ref{sec:OneRoundBound} amplify to a lower bound on $\chi(\NT_r)$ (Section \ref{sec:MainLB}). 
Any graph homomorphism $h: G\rightarrow H$ implies \mbox{$\chi(G)\leq \chi(H)$} and in Section \ref{sec:homomorphisms}, we show that for the correct choice of parameters there is a chain of  homomorphisms $\NT_r\longrightarrow \NB_r\longrightarrow \NCR_r$.
Hence the lower bound on $\chi(\NT_r)$ translates into a lower bound on $\chi(\NCR_r)$.
%Any graph homomorphism $h: G\rightarrow H$ implies \mbox{$\chi(G)\leq \chi(H)$} and in \Cref{sec:homomorphisms}, we show that there are homomorphisms from $\NB_r$ to the neighborhood graphs $\NCR_r$ and for a suitable choice of parameters there are also homomorphisms from $\NT_r$ to $\NB_r$. Hence the lower bound on $\chi(\NT_r)$ translates into a lower bound on the chromatic number of $\NCR_r$.
In Section \ref{sec:runtimeLowerBound}  we combine all results to compute a runtime lower bound on any distributed color reduction algorithm in the \SETLOCAL\ model.  %Notes on the upper bounds in the \SETLOCAL\ model can be found in \Cref{sec:UpperBounds}.
 %Notes on the upper bounds in the \SETLOCAL\ model can be found in \Cref{sec:UpperBounds}.

\subsection{One-Round Lower Bound in the \boldmath\LOCAL\ Model}
\label{sec:OneRoundBound}
For a set $S$ let $A\sqsubseteq S$ denote that $A$ is a multiset consisting of elements of $S$.
For integers $\Delta\geq 2$ and $m>\Delta$, we define the one-round neighborhood 
graph $\NH_1(m,\Delta)$ with
\begin{eqnarray*}
  V\left(\NH_1(m,\Delta)\right) & := &
  \set{(x,A) | x\in[m], A\sqsubseteq [m], |A|\leq \Delta, x\not\in A}
\end{eqnarray*}
  %E\left(\NH_1(m,\Delta)\right)& = &
  %\set{\set{(x,A),(y,B)} \in \binom{V\big({\NH_1}\big)}{2} \mid x \in B
  %  \land y \in A}.
	and there is an edge between two nodes $(x,A), (y,B)\in \NH_1(m,\Delta)$ if $x\in B \text{ and }y\in A$.
%\begin{align*}
%x\in B \text{ and }y\in A.
%\end{align*}

The above definition is the most general version of one-round neighborhood graphs in the \LOCAL\ model; however, in \cite{Kuhn2006On} the authors show that for a single round it is sufficient to let $A$ be a simple subset of $[m]$ (not a multiset) having exactly $\Delta$ elements.
After all, we do not know whether multisets are necessary when extending the definition for more than a single round. 
\begin{thm}\label{thm:oneRound}
  For all $\Delta\geq 2$ and $m\geq \Delta^2/4 + \Delta/2 + 1$, we have
  $\chi\big(\NH_1(m,\Delta)\big) > \frac{\Delta^2}{4}$.
\end{thm}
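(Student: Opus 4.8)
The plan is to derive a lower bound on the chromatic number of $\NH_1(m,\Delta)$ by exhibiting a large "clique-like" substructure that forces many colors — or, more precisely, by using an averaging/counting argument that bounds the independence number $\alpha(\NH_1(m,\Delta))$ from above and then invoking $\chi(G)\geq |V(G)|/\alpha(G)$. The key combinatorial fact to isolate is this: an independent set in $\NH_1(m,\Delta)$ is a collection of pairs $(x,A)$ with $x\notin A$, $|A|\le\Delta$, such that no two pairs $(x,A),(y,B)$ in the set satisfy $x\in B$ and $y\in A$ simultaneously. So first I would fix a target palette $[k]$ with $k=\lfloor\Delta^2/4\rfloor$ and suppose for contradiction that $\phi$ is a proper $k$-coloring of $\NH_1(m,\Delta)$. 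Then some color class is "small" relative to the whole vertex set, and the goal is to show no independent set can be that large.

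The heart of the argument is the one-round gadget idea the authors advertise as the core of the paper. Concretely, I would restrict attention to vertices of the form $(x,A)$ where $x$ ranges over a suitable ground set and $|A|=\Delta$ (the hardest case — recall from the remark after the definition that for one round it suffices to take $A$ a genuine $\Delta$-element subset). Given a would-be independent set $I$, consider the bipartite-type incidence between the "center" coordinates and the "set" coordinates. If two vertices $(x,A),(y,B)\in I$ have $x\neq y$, then to avoid the edge we need $x\notin B$ or $y\notin A$. The plan is to choose, for a fixed pair of distinct centers $x,y$, roughly $\Delta/2$ candidate neighbors around each, arranged so that any independent family must "commit" to one side of each such pair; a Dilworth/Turán-type count then shows $|I|$ is at most about $2m/\Delta$ (or a comparable bound), whereas $|V(\NH_1(m,\Delta))|\ge m\binom{m-1}{\Delta}\big/$(something), and the ratio exceeds $\Delta^2/4$ once $m\geq \Delta^2/4+\Delta/2+1$. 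The precise bookkeeping — in particular squeezing out exactly the constant $\Delta^2/4$ rather than a worse constant — is where the hypothesis on $m$ enters, via the identity $\Delta^2/4+\Delta/2+1 = (\Delta/2+ \lceil\cdot\rceil)(\dots)$, i.e. $m-1\geq \Delta^2/4+\Delta/2 = \frac{\Delta}{2}\bigl(\frac{\Delta}{2}+1\bigr)$, so that one can partition (most of) $[m]$ into about $\Delta/2+1$ blocks each of size about $\Delta/2$ and realize a vertex whose set $A$ hits one representative per block.

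Rather than a raw averaging bound, the cleaner route — and the one I would actually write — is a direct construction of a homomorphic preimage of a large complete graph, or equivalently of a large family of pairwise-adjacent "types." Split a $(\tfrac{\Delta}{2}+1)\cdot\tfrac{\Delta}{2}$-element subset of $[m]$ into blocks $B_1,\dots,B_{\Delta/2+1}$ each of size $\Delta/2$. For each choice of a pair $(i,j)$ with $i\ne j$ and each choice of center $x\in B_i$, form the vertex $\bigl(x,\,\{$one fixed element of each block$\}$ adjusted to include a representative of $B_j$ but not $x\bigr)$; arrange the representatives so that for any two such vertices built from "crossing" pairs the mutual-membership edge condition $x\in B$, $y\in A$ holds. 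Counting the resulting pairwise-adjacent vertices gives roughly $(\Delta/2)^2$ of them in a clique, hence $\chi\ge (\Delta/2)^2 > \Delta^2/4$ (strictly, using that the count is an integer $\ge \lfloor\Delta/2\rfloor\lceil\Delta/2\rceil \ge \Delta^2/4$ and one extra vertex is available).

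The main obstacle I anticipate is getting the constant exactly right: a naive clique construction tends to give $\Delta^2/4$ only asymptotically (with floor/ceiling losses for odd $\Delta$), so the delicate part is to exploit the "$+\,\Delta/2+1$" slack in the bound on $m$ to squeeze out one more vertex or one more block and thereby make the inequality strict for every $\Delta\geq 2$. I expect the final write-up to first treat even $\Delta$ cleanly and then handle odd $\Delta$ by a small separate adjustment, using exactly the room that $m\geq \Delta^2/4+\Delta/2+1$ provides. Handling multisets is a non-issue here since allowing $A$ to be a plain set only shrinks $V(\NH_1)$ and thus only makes the lower bound on $\chi$ harder to prove — so restricting to sets is the safe and honest move.
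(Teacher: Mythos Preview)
Your main proposal --- building a clique of size roughly $(\Delta/2)^2$ in $\NH_1(m,\Delta)$ --- cannot work, because the clique number of $\NH_1(m,\Delta)$ is exactly $\Delta+1$. Indeed, if $(x_1,A_1),\dots,(x_k,A_k)$ are pairwise adjacent then all centers are distinct (two vertices with the same center $x$ would need $x\in A_j$, contradicting $x\notin A_j$), and adjacency forces $\{x_1,\dots,x_k\}\setminus\{x_i\}\subseteq A_i$ for every $i$; since $|A_i|\le\Delta$ this gives $k\le\Delta+1$. So no ``large family of pairwise-adjacent types'' of the size you want exists, and the block construction you sketch cannot produce one. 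The earlier $|V|/\alpha$ idea is also a dead end as stated: all vertices sharing a fixed center form an independent set, so $\alpha(\NH_1)$ is enormous and the ratio $|V|/\alpha$ gives at best $m$, not $\Delta^2/4$ independent of $m$.

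The paper's proof is not structural but adversarial: assume a coloring with $c=\Delta^2/4$ colors and manufacture an uncolored vertex. Each color class $I$ induces a (partial) orientation of $K_m$ --- orient $\{x,y\}$ from $x$ to $y$ whenever some $(x,A)\in I$ has $y\in A$; independence of $I$ makes this consistent. A vertex $(x,A)$ lies in $I$ only if every edge $\{x,y\}$ with $y\in A$ is oriented away from $x$. Each orientation has at most one global source, so at most $c$ elements of $[m]$ are sources of some color class; discard them and pick a set $T$ of $\lfloor\Delta/2\rfloor+1$ non-sources (here the hypothesis $m\ge\Delta^2/4+\Delta/2+1$ is used). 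By pigeonhole some $x\in T$ is a $T$-source for at most $c/|T|$ color classes. Put $T\setminus\{x\}$ into $A$ to rule out all other color classes, then for each of the $\le c/|T|$ remaining classes add one witness $y$ with $\{x,y\}$ oriented toward $x$ (such $y$ exists because $x$ is not a global source). The resulting $A$ has size at most $\lfloor\Delta/2\rfloor + c/|T| < \Delta$, so $(x,A)$ is a legitimate vertex of $\NH_1$ lying in no color class --- a contradiction. The ``source/orientation'' device is the missing idea; it is what replaces the clique you were looking for.
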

The following proof captures the main idea of the constructions of Section \ref{sec:MainLB} in a simpler setting. In particular, it contains the main idea for the proof of Lemma \ref{lemma:cliqueConstruction}. Additionally, it provides an alternative characterization of the terms \emph{source} and \emph{non-source} (cf. Definition \ref{def:source}). We believe that this characterization gives a deeper understanding of subsequent proofs.

\begin{proof}
Let $I\subseteq V\big({\NH_1}\big)$ be an independent set of
  $\NH_1(m,\Delta)$. 
	Then $I$ induces an \emph{orientation $D_I$} of the
  edges of the complete graph $K_m$ on the vertices $[m]$ in the
  following way: For each $x,y\in [m]$, if there exists a node
  $(x,A)\in I$ for which $y\in A$, we say that the edge $\set{x,y}$ of
  $K_m$ is oriented from $x$ to $y$. 
	Because $I$ is an
  independent set, it is not possible that
  an edge $\set{i,j}$ is oriented in both directions. If $I$
  does not lead to an orientation of an edge $\set{x,y}$, we orient
  $\set{x,y}$ arbitrarily. We say that an independent set \emph{covers} a node $(x,A)$ if the edge $\{x,y\}$ is oriented towards $y$ for all $y\in A$. Clearly, $I$ covers all nodes with $(x,A)\in I$. 

	For a set $W\subseteq K_m$ we say that $x\in K_m$ is a \emph{$W$-source} of $I$
	 if for all $y\in W\setminus \set{x}$, the
  edge $\set{x,y}$ of $K_m$ is oriented from $x$ to $y$ (in
  $D_I$). If $W=K_m$ we simply call $x$ a \emph{source}.

  Now assume for contradiction that we are given a vertex coloring of
  $\NH_1(m,\Delta)$ that uses $c= \Delta^2/4$ colors. For
  each of the colors $k\in [c]$, the nodes $I_k$ colored with color
  $k$ form an independent set of $\NH_1(m,\Delta)$. The given
  $c$-coloring therefore also induces $c$ orientations $D_1,\dots,D_c$
  of $K_m$ such that for every $(x,A)\in V\left({\NH_1}\right)$, one of the
  $c$ orientations covers $(x,A)$.

  Let $S\subseteq [m]$ be the set of nodes of $K_m$, which are
  a source of some orientation $D_k, k\in[c]$. Note that every orientation has
  at most $1$ source and therefore $|S|\leq c$. For the remainder of
  the proof, we restrict our attention to integers in
  $\bar{S}=[m]\setminus S$.
  We first fix an arbitrary set $T\subseteq \bar{S}$ of size
  $|T|=\lfloor \Delta/2 \rfloor +1$. Because $m\geq
  \Delta^2/4 + \Delta/2 + 1$, such a set $T$ exists. Clearly, each orientation $D_k$ can have at most one
  $T$-source. By the pigeonhole principle there exists an $x\in T$ such that
  $x$ is a $T$-source for at most $c/|T|$
  orientations. W.l.o.g., assume that $x\in T$ is a $T$-source
  for orientations $D_1,\dots,D_q$, where $q\leq c/|T|$.

  We now construct a node $(x,A)\in V\big({\NH_1}\big)$ that is not
  covered by any of the $c$ orientations $D_1,\dots,D_c$. We start by
  adding all $\lfloor \Delta/2\rfloor$ elements of $T\setminus
  \set{x}$ to $A$. Because $x$ is a $T$-source only for
  orientations $D_1,\dots,D_q$, none of the remaining $c-q$
  orientations can cover $(x,A)$. We have to add additional elements
  to $A$ in order to make sure that the orientations $D_1,\dots,D_q$
  also do not cover $(x,A)$. As $T$ only consists of elements that are
  not sources of any of the orientations, for each orientation $D_k$, $k\in[c]$, there is an element $y_k\in[m]$ such that the edge
  $\set{y_k,x}$ of $K_m$ is oriented from $y_k$ to $x$. For each of
  the orientations $D_k\in\set{D_1,\dots,D_q}$, we pick such an
  element $y_k$ and add $y_k$ to $A$. In this way, we obtain a pair
  $(x,A)$ that is not covered by any of the orientations
  $D_1,\dots,D_q$. The size of $A$ is
  \[  |A| \leq |T|-1 + q \leq \left\lfloor \frac{\Delta}{2}\right\rfloor + 
  \frac{c}{|T|}  < \frac{\Delta}{2} + \frac{c}{\Delta/2} =
  \Delta  \]
  and thus, $(x,A)$ is a node of $\NH_1(m,\Delta)$ which is not covered by any independent set. In particular, it does not have a color, a contradiction.
\end{proof}

\subsection{Recursive Structure of the Neighborhood Graph}

\label{sec:MainLB}

In this section we study the recursive structure of $\NCR_r$ (the graph $\NCR_r$ can be built from $\NCR_{r-1}$). 
We define two recursively defined sequences of graphs, $(\NT_0,\NT_1,\ldots)$ and $(\NB_0,\NB_1,\ldots)$.  The graphs  $\NT_0$ and $\NB_0$ are equal to the $m$-node clique  on the nodes   $[m]$. The nodes of the remaining graphs of the sequences are built according to the following recursive procedure: For $i\geq 0$, in each of the two sequences, a node of the $(i+1)$-st graph  is created by a node $x$ of the $i$-th graph and a subset $A$ of its neighbors. The sequences differ in the way which combinations of $x$'s neighbors in the $i$-th graph are allowed to form the set $A$. 

To specify this we need to introduce some notation:
For $i\geq 0$  each node of the graph $\NT_{i+1}$ $\big(\text{or }\NB_{i+1}\big)$ will be of the form $(x,A)$,
 where $x\in \NT_{i}$ $\big(\text{or }\NB_{i}\big)$  and $A\subseteq \Gamma(x)$. Define the \emph{center} of a node $(x,A)$ as $z((x,A))=x$ and the \emph{types} of a node as $R((x,A))=A$. For any set of nodes $A$ let $z(A)=\{z(a) \mid a\in A\}$. 
For $x\in \NT_{0}~ \big(\text{or }\NB_{0}\big)$ we define $z(x)=\bot$ and $R(x)=\{\bot\}$.  
\begin{definitions}[]
\label{def:GraphDefinition}
Let $m, D$ and $r$ be fixed. 
$\NT_0(m,D):=\NB_0(m,D):=K_{m}$, i.e, the clique on the nodes $[m]$.
 For $0\leq i < r$ we have the following recursive definitions
\begin{eqnarray*}
V\big(\NT_{i+1}(m,D)\big)& := &\left\{(x,A) \mid x\in V(\NT_i(m,D)), A\subseteq \Gamma_{\NT_i}(x), |A|\leq D\right\},\\
V\big(\NB_{i+1}(m,D)\big) & := & \big\{(x,A) \mid x\in V(\NB_i(m,D)), A\subseteq \Gamma_{\NB_i}(x), |A|\leq D,  R(x)= z(A)\big\}.
\end{eqnarray*}
For $i\geq 1$ there is an edge between two nodes $(x,A), (y,B)\in \NT_i~(\text{or }\NB_i)$ if $x\in B \text{ and }y\in A$.
%\begin{align*}
%x\in B \text{ and }y\in A.
%\end{align*}
\end{definitions}
\begin{figure}
\centering
 \includegraphics[width=0.31\textwidth]{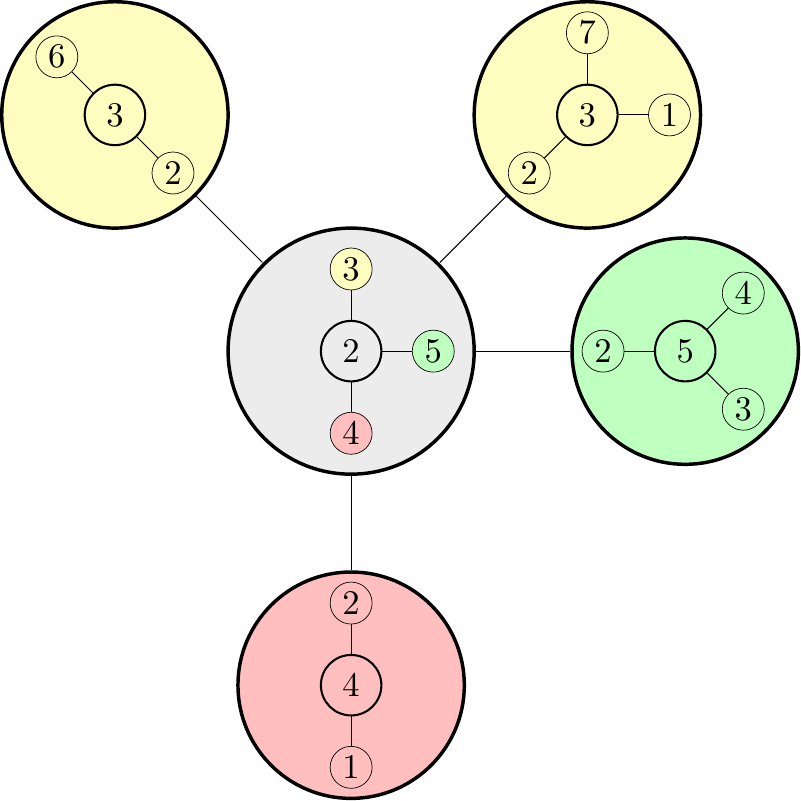}
%\leftbar
\hspace{0.2cm}
\includegraphics[page=3,width=0.31\textwidth]{neighborhoodGraphs.pdf}
%\endleftbar
%\leftbar
\hspace{0.2cm}
\includegraphics[page=2,width=0.31\textwidth]{neighborhoodGraphs.pdf}
%\endleftbar

\caption{Images (from left to right) explain the recursive structure of $\NB_2$, $\NH_2$ and $\NT_2$, respectively. The colors indicate types. The left image is the $2$-round view of the gray node in Figure \ref{fig:setModelTree} in the \SETLOCAL\ model. Here, after a single round node $2$ only knows that its degree is at least $3$, after the second round it learned that its degree is at least $4$ as it can now distinguish two neighbors with color $3$.
In the second image there is a neighbor of every type of the center with the corresponding multiplicity. For feasible nodes of $\NT_2$ the combination of neighbors is arbitrary w.r.t. to types of the center, e.g., in the third image there is no node for type $4$, i.e., no red node. We are not aware of a computational model to motivate $\NT_2$.} 
\label{fig:neighborhoodGraphs2}
\end{figure}
We denote $\NT_{i+1}(m,D)$ and $\NB_{i+1}(m,D)$ simply as  $\NT_{i+1}$ and $\NB_{i+1}$ whenever $m$ and $D$ are clear from the context. There is no restriction on the set $A\subseteq \Gamma_{\NT_i(m,D)}(x)$ to build a node of $\NT_{i+1}(m,D)$ as long as the size of $A$ is at most $D$. For a node \mbox{$(x,A)\in \NB_{i+1}(m,D)$} the set $A$ needs at least one fitting element for every type of $x$ (cf.~Figure \ref{fig:neighborhoodGraphs2}). When defining neighborhood graphs for the standard \LOCAL\ model (which we do not do in this paper for $r>1$) this restriction is even tighter. Then $A$, $R(x)$ and $z(A)$ need to be multisets and $A$ needs exactly one fitting element for every type of $x$.  
We are not aware of a  computational model which motivates the sequence $(\NT_0,\NT_1,\NT_2,\ldots)$.

In the proof of the lower bound on $\chi(\NT_r)$ we assume that we have a proper $c$-coloring of $\NT_r$ which implies (partial) $c$-colorings of the graphs $\NT_i, i<r$, cf. \mbox{Lemma \ref{lemma:partialColorings} and \mbox{Corollary \ref{corr:inducedColorings}}}. If $c$ is too small, we can use these partial colorings to construct an uncolored node $(x,A)\in \NT_r$, i.e., a contradiction. We begin with constructing an uncolored clique of $\NT_0$, which implies a smaller uncolored clique in $\NT_1$ and then a smaller uncolored clique in $\NT_2$ and so on until we reach an uncolored node in $\NT_r$.

The construction of a single uncolored node of $\NT_{i+1}$, denoted as $(x,A)$, is similar to the proof in \mbox{Section \ref{sec:OneRoundBound}}: For its construction we pick a suitable center node $x$ from the uncolored clique of $\NT_i$ and then (carefully) select up to $D$ neighbors to form the set $A$ which will ensure that the resulting node is uncolored. To iterate the argument we need to construct an uncolored clique of $\NT_{i+1}$ instead of a single uncolored node.

A center $x$ is suitable for the above process (this corresponds to a non-source in Section \ref{sec:OneRoundBound}) if for every color there exists a neighbor which, if contained in $A$,  implies that $(x,A)$ does not have this color. The (partial) $c$-coloring of $\NT_{i}$ induced by the (partial) $c$-coloring of $\NT_{i+1}$ is formed by all unsuitable centers. The following definitions/lemmata make this more precise.

\begin{definitions}[$W$-source, source]
\label{def:source}
For any set $I\subseteq \NT_{i+1}$ and any set $W\subseteq \NT_i$, a node 
 $x\in \NT_{i}$ is called a \textbf{\boldmath$W$-source} of $I$ if 
\[\forall w\in W\cap \Gamma_{\NT_i}(x): \exists (x,A)\in I \text{ with } w\in A.\]
If $W=\NT_i$ we call $x$ simply a \textbf{source} of $I$.
\end{definitions}

In this section we define \emph{sources} without the orientations from Section \ref{sec:OneRoundBound} because the proofs of consecutive statements become shorter, the characterization in Definition \ref{def:source} is slighty more general and possibly, it might even be further generalized in order to obtain a lower bound in the standard \LOCAL\ model.

Another intuition for the definition of sources is the following: Given a $c$-coloring of $\NT_{i+1}$ one realizes that $c$ can only be small if many nodes which have the same center have the same color. Note that nodes with the same center can never be adjacent. A natural approach would be to take a closer look at all centers $x$ which already determine the color of any node $(x,A)$. However, this is too restrictive and the centers which are sources generalize this approach.

In the following we show how a $c$-coloring of $\NT_r$ implies partial $c$-colorings of $\NT_i, i<r$.
For a set $I\subseteq \NT_{r}$ define $S_r(I):=I$ and for $i=r-1,\ldots,0$ inductively define  the following sequence of sets.
\begin{align*}
S_i(I):=\{x\in \NT_i \mid x \text{ is source of $S_{i+1}(I)$}\}.
\end{align*}
\begin{lemma}
\label{lemma:partialColorings}
Let $I$ be an independent set of $\NT_r$. 

\noindent Then for all $i=0,\ldots,r$ the set 
$S_i(I)$ is an independent set of $\NT_i$. 
\end{lemma}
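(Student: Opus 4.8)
The statement is purely inductive in $i$, running downward from $i=r$ to $i=0$, so I would prove it by (reverse) induction. The base case $i=r$ is immediate: $S_r(I)=I$, which is an independent set of $\NT_r$ by hypothesis. For the inductive step, suppose that $S_{i+1}(I)$ is an independent set of $\NT_{i+1}$; I must show $S_i(I)=\{x\in\NT_i \mid x \text{ is a source of } S_{i+1}(I)\}$ is an independent set of $\NT_i$. So I take two distinct nodes $x,y\in S_i(I)$ and assume toward a contradiction that $\{x,y\}\in E(\NT_i)$, and I aim to produce two adjacent nodes of $S_{i+1}(I)$, contradicting independence of $S_{i+1}(I)$.

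\textbf{Key step.}
Since $x,y$ are adjacent in $\NT_i$, in particular $y\in\Gamma_{\NT_i}(x)$ and $x\in\Gamma_{\NT_i}(y)$. Because $x$ is a source of $S_{i+1}(I)$, applying Definition \ref{def:source} with $W=\NT_i$ and the neighbor $w=y$ gives a node $(x,A)\in S_{i+1}(I)$ with $y\in A$. Symmetrically, since $y$ is a source, there is a node $(y,B)\in S_{i+1}(I)$ with $x\in B$. Now by the edge rule of Definition \ref{def:GraphDefinition}, two nodes $(x,A),(y,B)\in\NT_{i+1}$ are adjacent precisely when $x\in B$ and $y\in A$ — which is exactly what we have arranged. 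Hence $(x,A)$ and $(y,B)$ are adjacent nodes of $\NT_{i+1}$ both lying in $S_{i+1}(I)$, contradicting the inductive hypothesis that $S_{i+1}(I)$ is independent. Therefore no such edge $\{x,y\}$ exists, so $S_i(I)$ is independent in $\NT_i$, completing the induction. (One should also note $S_i(I)\subseteq V(\NT_i)$ by construction, so it is a legitimate vertex subset.)

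\textbf{Where the difficulty lies.}
The argument itself is short; the only thing to be careful about is matching the quantifier in the definition of source with the edge relation. The definition of a $W$-source quantifies over $w\in W\cap\Gamma_{\NT_i}(x)$ and asserts the existence of \emph{some} $(x,A)\in I$ with $w\in A$ — it does not say a single $A$ works for all $w$ simultaneously, but that is fine here since we only need one neighbor (namely the other endpoint of the hypothetical edge) handled at a time, and we invoke the source property once for $x$ and once for $y$, obtaining possibly different sets $A$ and $B$. The reverse-induction bookkeeping ($S_r(I)=I$ as base, indices descending) is the only mild subtlety, and there is no real obstacle beyond getting that direction right. I would also remark that this lemma is the engine that lets a $c$-coloring of $\NT_r$ push down to partial $c$-colorings of every $\NT_i$, which is the point of Corollary \ref{corr:inducedColorings}.
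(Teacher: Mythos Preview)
Your proof is correct and is essentially identical to the paper's own argument: a backwards induction with base case $S_r(I)=I$, and in the inductive step using the source property of $x$ and $y$ at the neighbor given by the assumed edge to produce adjacent $(x,A),(y,B)\in S_{i+1}(I)$, contradicting independence. The paper's proof is just a more compressed version of exactly what you wrote.
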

\begin{proof}
We prove the result by (backwards) induction on $i=r,\ldots,0$: the statement holds trivially for $i=r$. 
For the induction step let the statement be true for $i+1$.
Now, assume that $S_i(I)$ is not an independent set, i.e., there are two nodes $x,y\in S_i(I)$ with $\{x,y\}\in E(\NT_i)$.
As $x$ and $y$ are sources of $S_{i+1}(I)$ there exist nodes $(x,A_x),(y,A_y)\in S_{i+1}(I)$ with $y\in A_x$ and $x\in A_y$. Hence $\{(x,A_x),(y,A_y)\}\in E(\NT_{i+1})$, which is a contradiction to $S_{i+1}(I)$ being an independent set.
\end{proof}
Each color class of a $c$-coloring is an independent set. Thus any (partial) $c$-coloring corresponds to $c$ (disjoint) independent sets.
Vice versa any collection of $c$ independent sets induces a partial $c$-coloring though a single node might have more than one color. It is still a coloring in the sense that any of those colors is different from any color of its neighbors.  With this identification of colorings and independent sets we obtain the following corollary.

\begin{corr}
\label{corr:inducedColorings}
Let $I_1,\ldots, I_c$ be a $c$-coloring of $\NT_r$.
 
\noindent Then $S_i(I_1), \ldots, S_i(I_c)$ corresponds to a (partial) $c$-coloring of $\NT_i$.
\end{corr}
The following properties of sources are needed for the construction of uncolored nodes.

\begin{lemma}
\label{lemma:simpleSource}
Let $W\subseteq \NT_{i}$ be a clique and $I_1,\ldots, I_c$ independent sets of $\NT_{i+1}$. For all $k\in [c]$ we have
\begin{myenumerate}[(a)]
\item If $x\in \NT_{i}$ is not a source of $I_k$ there exists $w\in \Gamma_{\NT_i}(x)$ such that 
\[\text{$(x,A)\notin I_k$, whenever $w\in A$.}\]
\item There is at most one $W$-source of $I_k$ within $W$.
\item There exists a node $x\in W$ which is a $W$-source for at most $\frac{c}{|W|}$ of the independent sets $I_1,\ldots, I_c$. 
Furthermore, for any choice of $A$ with $W\setminus\{x\}\subseteq A$ we have $(x,A)\notin I_k$ for all but $\frac{c}{|W|}$ many independent sets.
\end{myenumerate}
\end{lemma}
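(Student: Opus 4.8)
The three parts should be handled in order, since (a) is used (implicitly) in the proof of (c), and (b) feeds directly into the pigeonhole step of (c). For part (a): if $x$ is not a source of $I_k$, then by Definition \ref{def:source} (applied with $W=\NT_i$) the defining implication must fail, i.e.\ there is some $w\in\Gamma_{\NT_i}(x)$ such that no node $(x,A)\in I_k$ has $w\in A$. That $w$ is exactly the witness claimed: any node $(x,A)$ with $w\in A$ cannot lie in $I_k$. So (a) is essentially just unwinding the negation of the source definition and requires no real work.

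For part (b): suppose $x,x'\in W$ are both $W$-sources of $I_k$ with $x\neq x'$. Since $W$ is a clique, $x$ and $x'$ are adjacent in $\NT_i$, so $x'\in W\cap\Gamma_{\NT_i}(x)$ and $x\in W\cap\Gamma_{\NT_i}(x')$. Applying the $W$-source property of $x$ to the element $x'$ gives a node $(x,A)\in I_k$ with $x'\in A$; applying the $W$-source property of $x'$ to $x$ gives $(x',A')\in I_k$ with $x\in A'$. But then $x\in A'$ and $x'\in A$, so by Definition \ref{def:GraphDefinition} the nodes $(x,A)$ and $(x',A')$ are adjacent in $\NT_{i+1}$, contradicting that $I_k$ is independent. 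Hence at most one $W$-source of $I_k$ lies in $W$. This mirrors the ``an edge cannot be oriented in both directions'' argument of Section \ref{sec:OneRoundBound}.

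For part (c): by (b), each $I_k$ has at most one $W$-source inside $W$, so if we count pairs $(x,k)$ with $x\in W$ a $W$-source of $I_k$, there are at most $c$ such pairs. By averaging (pigeonhole) over the $|W|$ elements of $W$, some $x\in W$ is a $W$-source of at most $c/|W|$ of the sets $I_1,\dots,I_c$. For the ``furthermore'' clause, fix this $x$ and any $A$ with $W\setminus\{x\}\subseteq A$. If $k$ is an index for which $x$ is \emph{not} a $W$-source of $I_k$, then the $W$-source implication fails at some $w\in W\cap\Gamma_{\NT_i}(x)$, i.e.\ no $(x,A')\in I_k$ has $w\in A'$; since $W$ is a clique, $w\in W\setminus\{x\}\subseteq A$, so in particular $(x,A)\notin I_k$. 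Thus $(x,A)\in I_k$ forces $x$ to be a $W$-source of $I_k$, which happens for at most $c/|W|$ indices. There is essentially no obstacle here beyond being careful that $W$ is a clique (so that all of $W\setminus\{x\}$ consists of neighbors of $x$ in $\NT_i$, which is what lets us invoke the $W$-source definition) and that the witness $w$ from the failed implication lies in $A$; the main subtlety is simply keeping the quantifiers in Definition \ref{def:source} straight when negating it.
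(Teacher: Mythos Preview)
Your proof is correct and follows essentially the same approach as the paper's. The only minor difference is in part (b): the paper appeals to Lemma~\ref{lemma:partialColorings} (sources of an independent set form an independent set) and then remarks that the same holds for $W$-sources, whereas you give the direct two-line argument producing adjacent vertices $(x,A),(x',A')\in I_k$; your version is self-contained and in fact is exactly the inductive step of Lemma~\ref{lemma:partialColorings} specialized to $W$-sources.
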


\begin{proof}
\emph{Proof of $a)$:} This follows by negating the definition of a source. 
\emph{Proof of $b)$:} $b)$ holds because $I_k$ is an independent set and due to Lemma \ref{lemma:partialColorings} the sources form an independent set, i.e., there can be at most one in the clique $W$. This does not change change if we consider $W$-sources. 
\emph{Proof of $c)$:} By $b)$ there is at most one $W$-source in $W$ for each of the $c$ independent sets. Thus there is a node $x\in W$ which is a $W$-source for at most $\frac{c}{|W|}$ independent sets by the pigeonhole principle. The second statement follows because whenever there is a node $(x,A)\in I_k$ and $W\setminus\{x\}\subseteq A$ then, by definition, $x$ is a $W$-source of $I_k$.
\end{proof}
For sets $I_1,\ldots,I_c\subseteq\NT_r$ we call a subset $T\subseteq \NT_i$ \emph{uncolored} if we have $T\cap S_i(I_k)=\emptyset$ for all $k\in [c]$.

\begin{lemma}
\label{lemma:cliqueConstruction}
Let $p, c$ and $d$ be positive integers and let $I_1,\ldots,I_c$ be independent sets of $\NT_r$ with
\begin{align}
\label{cond:CliqueCondition}
p+d-1+\frac{c}{d}\leq D.
\end{align}
Any uncolored  clique $T\subseteq \NT_i$ of size $p+d$  leads to an uncolored  clique $T'\subseteq \NT_{i+1}$ of size $p$.
\end{lemma}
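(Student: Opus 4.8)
The plan is to mimic the one-round argument of Theorem~\ref{thm:oneRound}, but now carried out inside the graph $\NT_i$ and producing a whole clique rather than a single vertex. Start with the uncolored clique $T\subseteq\NT_i$ of size $p+d$. The idea is to split $T$ into two parts: a ``seed'' set $U\subseteq T$ of size $d$ which will be forced into every constructed node as common neighbors, and the remaining $p$ vertices of $T$, each of which will serve as the center of one node of the new clique $T'$. More precisely, first apply Lemma~\ref{lemma:simpleSource}(c) to $W:=T$: there is a vertex $x_1\in T$ which is a $T$-source for at most $c/|T|\le c/d$ of the independent sets, and moreover for any $A$ with $T\setminus\{x_1\}\subseteq A$ we have $(x_1,A)\notin I_k$ for all but at most $c/d$ of the $I_k$. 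I would like to do this simultaneously for $p$ different centers, so I instead fix an arbitrary $d$-subset $U\subseteq T$ and look, for each center $x\in T\setminus U$, at the clique $W_x:=U\cup\{x\}$ of size $d+1\le |T|$; by part (c) there is a choice making $x$ a $W_x$-source for few sets — but here I actually want to control this for \emph{every} $x\in T\setminus U$, not just one, so the cleaner route is: pick $U\subseteq T$ of size $d$ arbitrarily, and for each of the $p$ remaining vertices $x$, note that since $U\cup\{x\}$ is a clique of size $d+1$, part (b) says each $I_k$ has at most one $(U\cup\{x\})$-source in $U\cup\{x\}$; the bad case is when that source is $x$ itself.

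The construction of a node $(x,A)\in\NT_{i+1}$ centered at $x\in T\setminus U$ goes as follows. Put $U\subseteq A$, using $d$ slots. For each $I_k$ among those $c$ independent sets for which $x$ is \emph{not} a source (i.e.\ all but the ones where $x$ is a source of $I_k$), Lemma~\ref{lemma:simpleSource}(a) gives a neighbor $w_k\in\Gamma_{\NT_i}(x)$ with $(x,A')\notin I_k$ whenever $w_k\in A'$; I only need to worry about killing those $I_k$ for which $x$ \emph{is} a source of $I_k$ but not already excluded. By making $x$ a $(U\cup\{x\})$-source situation work in our favour — this is exactly where condition~(c) is invoked — we arrange that $x$ is a source of $I_k$ for at most $c/d$ many $k$, and for those we add one ``killer'' neighbor $y_k$ each, using at most $c/d$ further slots. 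The remaining $I_k$ are killed either because $x$ is not a source of $I_k$ (use $w_k$, but we must be careful these were already handled — in fact the right bookkeeping is: an uncolored node must avoid $S_{i+1}(I_k)$, and $x\in S_i(I_k)$ would be needed for $x$ to even be relevant, which it is not since $T$ is uncolored, so actually $x\notin S_i(I_k)$, meaning $x$ is not a source of $S_{i+1}(I_k)$ for any $k$). This last observation is the key simplification: because $T$ is uncolored, every $x\in T$ satisfies $x\notin S_i(I_k)$ for all $k$, i.e.\ $x$ is a source of \emph{none} of the sets $S_{i+1}(I_k)=:I_k$. Hence by part (a), for each $k$ there is $w_k\in\Gamma_{\NT_i}(x)$ with $(x,A)\notin I_k$ whenever $w_k\in A$. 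So we would add all $p-1$ vertices of $(T\setminus U)\setminus\{x\}$ — wait, that over-counts; let me re-budget: $A$ should contain the other $p-1$ centers (so that $T'$ becomes a clique), plus $U$ ($d$ slots isn't right either).

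Let me state the intended budget cleanly: $A$ consists of (i) the $p-1$ other chosen centers of $T'$, forcing $T'$ to be a clique in $\NT_{i+1}$; (ii) the $d$ vertices of a fixed set $D_x$ chosen via Lemma~\ref{lemma:simpleSource}(c) applied to a clique of size $d$ inside $T$, ensuring that at most $c/d$ of the $I_k$ remain uncontrolled; and (iii) one killer vertex $y_k$ for each of those $\le c/d$ remaining sets, supplied by part (a) since $x$ is a source of none of the $I_k$ (using uncoloredness of $T$). Then $|A|\le (p-1)+d+c/d\le D$ by~\eqref{cond:CliqueCondition}, so $(x,A)$ is a legitimate vertex of $\NT_{i+1}$, and by construction $(x,A)\notin I_k$ for every $k$, i.e.\ $(x,A)$ is uncolored; equivalently $x$ is not a source of $S_{i+1}(I_k)$, so $x\notin S_i\text{-thing}$ — rather, the node $(x,A)$ we built lies outside every $I_k$ as a subset of $\NT_{i+1}$, which is precisely what ``$T'$ uncolored'' means since $S_{i+1}(I_k)=I_k$. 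Running this for $p$ distinct centers chosen from $T\setminus U$ yields the clique $T'\subseteq\NT_{i+1}$ of size $p$.

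The main obstacle I anticipate is the simultaneous control of the ``source'' behaviour across all $p$ centers at once: Lemma~\ref{lemma:simpleSource}(c) hands us a good center $x$ one at a time, so some care (an averaging/peeling argument over $T$, or choosing the seed set $U$ first and then invoking part (b) uniformly) is needed to guarantee that \emph{many} vertices of $T$, not just one, can play the role of center with the $c/d$ bound intact. The counting must also be reconciled with the observation that uncoloredness of $T$ already forces $x$ to be a source of none of the $I_k$ — this is what makes part (a) applicable for every residual $k$ and what keeps the killer-vertex count down to $c/d$ rather than $c$. The degree inequality~\eqref{cond:CliqueCondition} is then exactly the arithmetic $(p-1)+d+c/d\le D$, leaving the right slack.
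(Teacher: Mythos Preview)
Your approach is the paper's, and you have correctly located both the key insight (uncoloredness of $T$ means every $x\in T$ is a non-source of each $S_{i+1}(I_k)$, so part~(a) of Lemma~\ref{lemma:simpleSource} is always available to supply killer vertices) and the one real obstacle (the $c/d$ bound must hold for $p$ centers, not just one). Two remarks to close the gap you already anticipate.

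\emph{The fixed seed set $U$ does not work.} If you fix $U$ first and try to use an arbitrary $x\in T\setminus U$ as center, Lemma~\ref{lemma:simpleSource}(c) applied to $U\cup\{x\}$ only guarantees that \emph{some} vertex of that clique is a $(U\cup\{x\})$-source for at most $c/(d+1)$ sets; nothing forces this vertex to be $x$ rather than a member of $U$. So step~(ii) of your ``clean budget'' cannot be carried out with the centers chosen in advance.

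\emph{The peeling you suggest is exactly the paper's argument.} Choose the centers iteratively: having already picked $t_1,\dots,t_{j-1}\in T$, take any $d$-subset $T_j\subseteq T\setminus\{t_1,\dots,t_{j-1}\}$ (possible since $|T|=p+d$) and apply Lemma~\ref{lemma:simpleSource}(c) to the clique $T_j$ with respect to the independent sets $S_{i+1}(I_1),\dots,S_{i+1}(I_c)$ of $\NT_{i+1}$. This produces $t_j\in T_j$ that is a $T_j$-source for at most $c/d$ of them. After $p$ iterations set
\[
x_j:=(t_j,A_j),\qquad A_j:=(T\setminus\{t_j\})\cup B_j,
\]
where $B_j$ contains one killer per residual set. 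Since $T_j\setminus\{t_j\}\subseteq T\setminus\{t_j\}\subseteq A_j$, the second clause of (c) already disposes of all but at most $c/d$ of the $S_{i+1}(I_k)$; since $t_j\in T$ is uncolored, (a) supplies the killers for those. Hence $|A_j|\le (p+d-1)+c/d\le D$ by~\eqref{cond:CliqueCondition}, and the clique structure of $T'=\{x_1,\dots,x_p\}$ is automatic because every $A_j$ contains all of $T\setminus\{t_j\}$, in particular every other center $t_{j'}$. Note that your separate accounting ``$(p-1)$ other centers $+$ $d$ seed vertices'' is just a decomposition of $|T\setminus\{t_j\}|=p+d-1$; the paper simply puts all of $T\setminus\{t_j\}$ into $A_j$.

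One notational slip: you write $S_{i+1}(I_k)=I_k$, but this holds only when $i+1=r$. In general the relevant independent sets in $\NT_{i+1}$ are the $S_{i+1}(I_k)$ (independent by Lemma~\ref{lemma:partialColorings}), and ``$(x,A)$ uncolored in $\NT_{i+1}$'' means $(x,A)\notin S_{i+1}(I_k)$ for every $k$.
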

\begin{proof}

We inductively determine nodes $t_1,\ldots, t_{p}\in T$ which will form the centers of the clique nodes in $\NT_{i+1}$. Assume that nodes \mbox{$t_1,\ldots,t_{j-1}\in T$} are already determined and let $T_j$ be any subset of $T\setminus \{t_1,\ldots,t_{j-1}\}$ with size $d$. Such a set exists because $|T\setminus\{t_1,\ldots,t_{j-1}\}|\geq p+d-(j-1)\geq d$.

$T_j$ is a clique in $\NT_i$ and by Lemma \ref{lemma:partialColorings} the sets $S_{i+1}(I_1),\ldots, S_{i+1}(I_c)$ are independent sets in $\NT_{i+1}$. Hence there exists a node in $T_j$ which is a $T_j$-source for at most $q\leq\frac{c}{|T_j|}=\frac{c}{d}$ independent sets by \mbox{Lemma \ref{lemma:simpleSource} $(c)$}. Denote this node by $t_j$ and continue with determining the node $t_{j+1}$.

After determining $t_1,\ldots, t_{p}$ construct the uncolored clique of $\NT_{i+1}$ as follows: For \mbox{$j=1,\ldots,p$} let 
\begin{align*}
x_j:=(t_j,A_j)=\left(t_j,\left(T\setminus\{t_j\}\right)\cup B_j\right), 
\end{align*}
where $B_j$ will be constructed later. Regardless of the choice of the $B_i$'s the nodes $x_1,\ldots,x_{p}$ form a clique because $t_{j'}\in A_j$ and $t_j\in A_{j'}$ for $j\neq j'$.

We argue how to choose the set $B_j$ such that $x_j$ is uncolored in $\NT_{i+1}$. 
Due to the choice of $t_j$ and $T_j\setminus\{t_j\}\subseteq A_j$ all but $q$ independent sets do not contain $x_j$ and we eliminate each of those one-by-one with the choice of $B_j$.
W.l.o.g. let the remaining independent sets be $S_{i+1}(I_1),\ldots,S_{i+1}(I_{q})$.  Because $t_j\in T$ is uncolored in $\NT_{i}$ it is not a source for any of the independent sets $S_{i+1}(I_k), k\in [q]$. Hence via \mbox{Lemma \ref{lemma:simpleSource} $(a)$} there exist  $b_1,\ldots,b_{q}\in \Gamma_{\NT_i}(t_j)$ such that $(t_j,A_j)$ is not contained in any of the independent sets $S_{i+1}(I_k), k\in[q]$, whenever \mbox{$B_j:=\{b_1,\ldots,b_{q}\}\subseteq A_j$}. Hence $x_j$ is uncolored.

The node $x_j$ actually is a valid node of $\NT_{i+1}$ (cf. Definition \ref{def:GraphDefinition}), as  $A_j\subseteq \Gamma_{\NT_i}(t_j)$ and 
\begin{align*}
|A_j|=|T\backslash \{t_j\} \cup B_j| \leq p+d-1+\frac{c}{d}\stackrel{(\ref{cond:CliqueCondition})}{\leq} D.\label{conditionToFormANode}
\end{align*}
Hence $x_1,\ldots, x_{p}$ is an uncolored clique of $\NT_{i+1}$.
\end{proof}
An identical proof for the neighborhood graphs in the \LOCAL\ model fails in the last step because the newly constructed node might not be a node of $\NH_{i+1}$ due to mismatching types (cf. the comment after \mbox{Definition \ref{def:GraphDefinition}}). 
\begin{thms}
\label{theorem:lowerBoundNT}
Let $m\geq \frac{D^2}{4r}+\frac{D}{2}+1$. Then the following lower bound on the chromatic number holds
\begin{align}
\chi(\NT_{r}(m,D))> \frac{D^2}{4r}.
\end{align}
\end{thms}
\begin{proof}\footnote{Note that parameters here and in Lemma \ref{lemma:cliqueConstruction} are not optimized to the very last because we are only interested in asymptotic behaviour.}
Assume that there is a coloring of $\NT_r(m,D)$ with $c=\frac{D^2}{4r}$ colors, i.e., there are
independent sets $I_1,\ldots, I_c$  which partition the node set of $\NT_r$.
For $i=0,\ldots,r$ the  sets 
$S_i(I_1),\ldots, S_i(I_c)$ form independent sets and a partial coloring with $c$ colors of $\NT_i$ by Lemma \ref{lemma:partialColorings} and Corollary \ref{corr:inducedColorings}.

Let $d=\frac{D}{2r}$. With the help of Lemma \ref{lemma:cliqueConstruction} we inductively (induction on $i=0,\ldots,r$) construct uncolored cliques of size $rd-id+1$ in $\NT_i$ w.r.t. the respective partial coloring induced by $S_i(I_1),\ldots, S_i(I_c)$.\\
\textbf{Base case:}
By Lemma \ref{lemma:simpleSource} $(b)$ and because $\NT_0$ is a clique, the partial $c$-coloring $S_0(I_1),\ldots ,S_0(I_c)$ of $\NT_0$ can color at most $c$ nodes. As $m\geq c+rd+1$ there are $rd+1$ nodes in $\NT_0$ which form an uncolored clique. \\
\textbf{Induction step:} Let $p=rd-(i+1)d+1$ and let $T$ be an uncolored clique of size $rd-id+1=p+d$  in $\NT_i$.
Then Condition (\ref{cond:CliqueCondition}) in \mbox{Lemma \ref{lemma:cliqueConstruction}}  is satisfied because
\[p+d-1+\frac{c}{d}=rd-id+\frac{c}{d}\leq rd+\frac{c}{d}=\frac{D}{2}+\frac{D}{2}\leq D.\]
With \mbox{Lemma \ref{lemma:cliqueConstruction}} we obtain an uncolored clique of size $p$ in $\NT_{i+1}$. By the principle of induction the result holds for all $i$ and there  is an uncolored clique of size $1$ in $\NT_r$, i.e., an uncolored node, a contradiction.
\end{proof}

\subsection{Graph Homomorphisms}
\label{sec:homomorphisms}
The existence of a graph homomorphism from $\NB_r(m,D)$ to $\NCR_r(m,D)$ is intuitive as the recursive structure of both graphs is exactly the same.
\begin{lemma}\label{lemma:graphHomo}
 There is a graph homomorphism
\begin{align*}
h_r: \NB_r(m,D)\rightarrow \NCR_r(m,D).
\end{align*}
\end{lemma}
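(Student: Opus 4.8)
The plan is to construct $h_r$ recursively, mirroring the recursive definitions of $\NB_r$ and $\NCR_r$. Recall that $\NB_r$ is built from an abstract combinatorial recursion on the clique $[m]$, whereas $\NCR_r$ is built from genuine $r$-views $\SSS^G_r(v)$ of nodes in $m$-colored trees. At the base level, $\NB_0(m,D)=K_m$ and $V(\NCR_0(m,D))=\{\psi(v)\}=[m]$, so $h_0$ is the identity. For the inductive step, suppose $h_i:\NB_i\to\NCR_i$ has been defined. A node of $\NB_{i+1}$ has the form $(x,A)$ with $x\in\NB_i$, $A\subseteq\Gamma_{\NB_i}(x)$, $|A|\leq D$, and $R(x)=z(A)$; I would define
\[
  h_{i+1}\big((x,A)\big) := \Big(h_i(x),\ \{\,h_i(a) \mid a\in A\,\}\Big).
\]
The first task is to check that the right-hand side is a legitimate node of $\NCR_{i+1}$, i.e., that it really is the $(i{+}1)$-view of some node in some $m$-colored tree of maximum degree $\leq D$. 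The second task is to check that $h_{i+1}$ preserves edges: if $(x,A)$ and $(y,B)$ are adjacent in $\NB_{i+1}$, meaning $x\in B$ and $y\in A$, then $h_i(x)\in\{h_i(b):b\in B\}$ and $h_i(y)\in\{h_i(a):a\in A\}$, which is exactly the edge relation in $\NCR_{i+1}$ — so edge preservation is essentially immediate once well-definedness is established.

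For well-definedness, the key observation is the correspondence between the combinatorial structure of $\NB$-nodes and tree unfoldings. I would argue by induction that every node $w$ of $\NB_i$ is the image under $h_i$ of (equivalently, corresponds to) the $i$-view of the root of a finite $m$-colored tree $G_w$ with $\Delta(G_w)\leq D$, and moreover that the ``types'' $R(w)$ record precisely the set of $(i{-}1)$-views that appear among the children of that root. Given $(x,A)\in\NB_{i+1}$, take the tree $G_x$ realizing $x$, and for each $a\in A$ take a tree $G_a$ realizing $a$ as its root's view; form a new tree by taking a fresh root $\rho$, attaching it to the root of $G_x$ and to the root of each $G_a$ (with the coloring inherited, which is consistent because all these are views over the same color palette $[m]$ and the adjacency only constrains the top colors). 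Here the condition $R(x)=z(A)$ is exactly what guarantees the resulting construction is coherent: it ensures the children-view data demanded by $x$'s own structure is matched by the actual neighbors supplied by $A$, so that the $(i{+}1)$-view of $\rho$ in the new tree equals $\big(h_i(x),\{h_i(a):a\in A\}\big)$. The degree bound $|A|\le D$ keeps $\Delta\le D$. One should also double-check the subtle point flagged in the text: in $\NCR$ (as opposed to the standard $\LOCAL$ neighborhood graph) a node only receives the \emph{set} of neighbor-views, so multiplicities collapse — this is why passing from the set $A$ to the set $\{h_i(a):a\in A\}$ loses no information relevant to $\NCR$, and why the weaker constraint $R(x)=z(A)$ (rather than an exact-multiset constraint) is the right one for $\NB$.

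The main obstacle I expect is making the ``$r$-view realizability'' bookkeeping airtight — precisely, proving by induction the statement that the combinatorially-defined nodes of $\NB_i$ are in bijection with (or at least surject onto) the realizable $i$-views in $\NCR_i$, with the types $R$ tracking children-views correctly. This requires being careful that (i) when we glue trees at a fresh root, the view of that root is computed correctly from the views of the glued subtrees, which is really just unwinding Definition~\ref{def:rneighborhood}; and (ii) the $R(x)=z(A)$ constraint is neither too strong (it should allow every realizable configuration) nor too weak (it should forbid $A$'s that could not arise as an actual neighbor-set). A clean way to handle this is to first prove an auxiliary claim: for every $w\in\NB_i$ there is an $m$-colored tree $T_w$ of max-degree $\le D$ with a designated root $\rho_w$ such that $\SSS^{T_w}_i(\rho_w)$ equals the ``view encoded by $w$'' and $R(w)=\{\SSS^{T_w}_{i-1}(u):u\in\Gamma_{T_w}(\rho_w)\}$; then $h_i(w):=\SSS^{T_w}_i(\rho_w)$ is forced, and the inductive gluing argument above both defines $h_{i+1}$ and verifies the auxiliary claim at level $i+1$. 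Once that scaffolding is in place, edge-preservation and the degree bound are routine, and setting $h_r:=h_{\text{(at level }r)}$ finishes the lemma.
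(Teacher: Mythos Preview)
Your recursive definition of $h_{r+1}$ and the overall inductive framework match the paper exactly. The gap is in the tree construction for well-definedness (and, relatedly, in your treatment of edge preservation).

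Your proposed construction --- a fresh root $\rho$ attached to the root of $G_x$ and to the roots of all $G_a$ --- does not work. First, degrees break: the roots of $G_a$ (and of $G_x$) may already have degree $D$, so adding $\rho$ as a new neighbor pushes them to $D{+}1$; and $\rho$ itself has degree $|A|+1$, which may exceed $D$. Second, and more fundamentally, the $(i{+}1)$-view of $\rho$ is $(\SSS_i(\rho),\{\SSS_i(u):u\in\Gamma(\rho)\})$; for this to equal $(h_i(x),\{h_i(a):a\in A\})$ you need $\SSS_i(\rho)=h_i(x)$, which your $\rho$ does not satisfy --- $\rho$ is a brand-new vertex whose $i$-view depends on its own color and on \emph{all} of $G_x$ and the $G_a$'s hanging off it, not on $x$. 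The paper's construction avoids both issues: it does \emph{not} use a tree $G_x$ for $x$ at all. Instead it takes, for each $a\in A$, a witnessing tree $H_a$ (root $z_a$, $r$-view $h_r(a)$), observes that because $a$ and $x$ are adjacent in $\NB_r$ the root $z_a$ has a child with $(r{-}1)$-view $h_{r-1}(z(x))$, \emph{deletes} that child's subtree, and then glues all the pruned $H_a$'s to a single new node $z_x$ colored $z^r(x)$. The deletion keeps each $z_a$ at degree $\le D$; the constraint $R(x)=z(A)$ then guarantees that the $(r{-}1)$-views the $z_a$'s contribute to $z_x$ are exactly those prescribed by $x$, so $\SSS_r(z_x)=h_r(x)$ (this is verified by a nested strong induction on the level of the view).

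Your claim that edge preservation is ``essentially immediate'' because ``$h_i(x)\in\{h_i(b):b\in B\}$ \dots\ is exactly the edge relation in $\NCR_{i+1}$'' is also incorrect as stated: edges in $\NCR_{i+1}$ are \emph{defined} by the existence of a single tree in which both views occur at adjacent vertices, not by that combinatorial inclusion. One must actually build such a tree --- the paper does this by the same splicing trick, cutting a suitable branch from each of the two witness trees $G_{(x,A)}$ and $G_{(y,B)}$ and joining their roots by an edge.
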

\begin{proof}

Formally the center function $z$ is a function with domain $\NT_{j+1}$ and range $\NT_{j}$ for some fixed $j$; so it should be indexed as $z_j$. However, in the course of this proof we slightly missuse notation and for $x\in \NT_r$ we define $z^{r-i}(x):=z_{i}(z_{i+1}(\ldots z_{r-2}(z_{r-1}(x))\ldots))\in \NT_{i}$ as the element which one obtains when recursively applying $z_j$ with $j=r-1,\ldots,i$.

We perform a constructive proof with an induction on $r$. 

\noindent\textbf{Base case:}
$\NB_0(m,D)$ and $\NCR_0(m,D)$ are both the complete graph on the nodes $[m]$. Define
\begin{align*}
h_0(i):=i\text{ for }i\in [m].
\end{align*}
$h_0$ is well defined and a graph homomorphism.

\noindent\textbf{Induction step:}
Let $(x,A)\in \NB_{r+1}(m,D)$, that is $x\in \NB_r(m,D)$ and $A\subseteq \Gamma_{\NB_r}(x)$ and define 
\begin{align*}
h_{r+1}((x,A)):=\left(h_r(x), \left\{h_r(a) \mid a\in A\right\}\right).
\end{align*}
\begin{figure}[htp]
\centering
\includegraphics[width=0.8\textwidth]{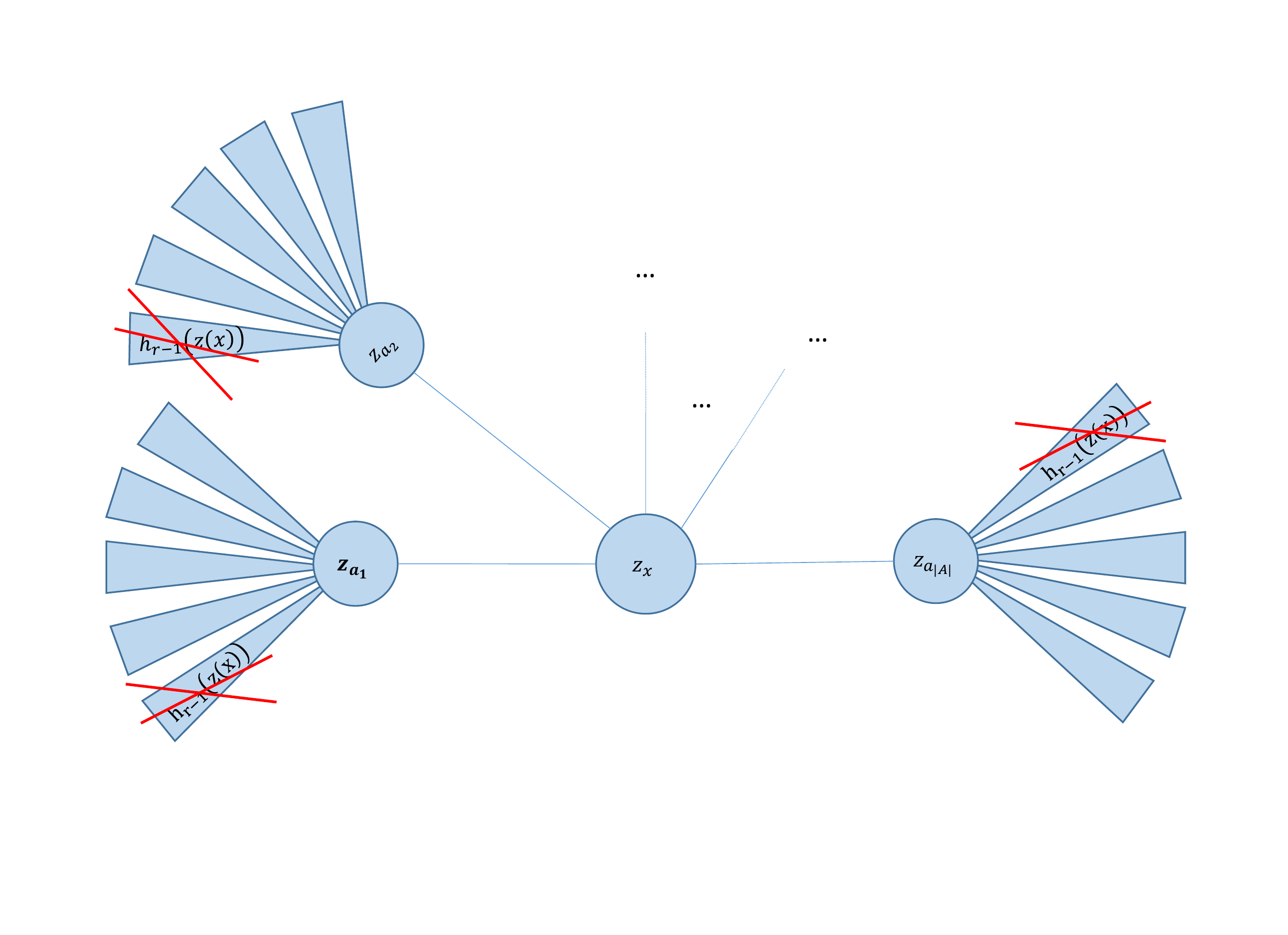}
\label{fig:lemma41Proof}
\caption{In the proof of Lemma \ref{lemma:graphHomo} there is a tree $H_a$ for every $a\in A$ and a tree $H_x$ for $x$. We cut off one branch with $(r-1)$-view $h_{r-1}(z(x))$. Then we connect all trees to the node $z_x$ and show that $z_x$ and all $z_a, a\in A$ still have the same $r$-views as in the respective trees.}
\end{figure}
\noindent\textbf{Claim 1:} $h_{r+1}(x,A)\in \NCR_{r+1}(m,D)$.
\begin{proof}[Proof of Claim 1]
We need to show that $h_{r+1}((x,A))$ actually is a node of $\NCR_{r+1}$, that is, there exists a tree $G_{(x,A)}$ with maximum degree $D$ and a node which has $(r+1)$-view $h_{r+1}((x,A))$.

Construct the following tree:  
Define $z_x:=z^{r}(x)\in \NB_0$ and for $a\in A$ define \mbox{$z_a:=z^{r}(a)\in \NB_0$}.
By the induction hypothesis there are trees $H_a$ with root $z_a$ for every $a\in A$ such that the $r$-view of $z_a$ in $H_a$ is $h_r(a)$. 
In $H_a$ the root $z_a$ has at least one neighbor with $(r-1)$-view $h_{r-1}(z(x))$. 
 Let $\TG_a$ denote the tree $H_a$ where the subtree starting at that node is removed.
Build a new graph $G_{(x,A)}$ with node and edge set
\begin{eqnarray*}
V(G_{(x,A)})& := & \{z_x\}\dcup \bigcup_{a\in A} V(\TG_a),\\
E(G_{(x,A)}) & := & \bigcup_{a\in A} \{z_x, z_a\}\dcup\bigcup_{a\in A} E(\TG_a).
\end{eqnarray*}
For any $a\in A$ the degree of $z_a$ in $\TG_a$ is at most $D-1$. Hence the degree of $z_a$ in $G_{(x,A)}$ is at most $D$. Because $|A|\leq D$ the degree of $z_x$ is at most $D$ and all remaining nodes have degree at most $D$ as well.

 We need to show that the $(r+1)$-view of $z_x$ in $G_{(x,A)}$ equals $h_{r+1}(x,A)$. Because there is a node $a\in A$ for every type of $x$ (, i.e., $z(A)= R(x)$, cf. Definition \ref{def:GraphDefinition})  it is sufficient to show that the $r$-view of $z_x$ is $h_r(x)$ and the $r$-view of $z_a$ in $G_{(x,A)}$ is $h_r(a)$ for all $a\in A$, that is, we need to show
\begin{align}
\label{cond:WhatToShow}
S^{G_{(x,A)}}_{r+1}(z_x)=\left(S^{G_{(x,A)}}_{i}(z_x), \left\{S^{G_{(x,A)}}_{i}(z_a) \mid a\in A\right\}\right)\stackrel{!}{=} \left(h_{r}(x), \left\{h_r(a)\mid a\in A\right\}\right)=h_{r+1}(x,A).
\end{align}
At first observe that if the $i$-views of two nodes are the same also their $j$-views are the same for $j\leq i$.
To prove (\ref{cond:WhatToShow}) we use a strong induction on $0\leq i < r$ with induction hypothesis  
\begin{align}
\label{cond:strongInduction}
S^{G_{(x,A)}}_i(z_x)& =h_i\left(z^{r-i}(x)\right) &\text{ and } &  & S^{G_{(x,A)}}_i(z_a)& =h_i\left(z^{r-i}(a)\right)\text{ for }a\in A.
\end{align}
 \textbf{Base case (Claim 1):} The result holds trivially for $i=0$, because $h_0(z^{r-0}(x))=z_x$, which equals the $0$-view of $z_x$ in the tree $G_{(x,A)}$, and $h_0(z^{r-0}(a))=z_a$, which equals the $0$-view of $z_x$ in the tree $G_{(x,A)}$. \\
\textbf{Induction step (Claim 1):} Assume that (\ref{cond:strongInduction}) holds for some $i$; we show that it holds for $i+1$ as well. 
As the $i$-views of all $z_a, a\in A$ are equal to $h_i(z^{r-i}(a))$ and the $i$-view of $z_x$ equals to $h_i\left(z^{r-i}(x)\right)$ by the induction hypothesis  we can deduce 
\begin{align*}
S^{G_{(x,A)}}_{i+1}(z_x) & =\left(S^{G_{(x,A)}}_{i}(z_x), \left\{S^{G_{(x,A)}}_{i}(z_a) \mid a\in A\right\}\right)\\
& = \left(h_{i}(z^{r-i}(x)), \left\{h_i(z^{r-i}(a))\mid a\in A\right\}\right)\\
& = h_{i+1}(z^{r-i-1}(x)).
\end{align*}
Furthermore, by definition  we have
\begin{align}
\label{line1}
S^{G_{(x,A)}}_{i+1}(z_a) & =\left(S^{G_{(x,A)}}_{i}(z_a), \left\{S^{G_{(x,A)}}_{i}(u) \mid u\in \Gamma(z_a)\right\}\right).
\intertext{We show that the $i$-view of $u\in \Gamma(z_a)$ is the same in $G_{(x,A)}$ and $H_a$: If $u\neq z_x$ this follows because the $(i-1)$-view of $z_a\in \Gamma(u)$ is the same in $H_a$ and $G_{(x,A)}$ due to the induction hypothesis. For $u=z_x$ the $(i-1)$-view  equals to the $(i-1)$-view of the deleted subtree by the induction hypothesis. Hence (\ref{line1}) equals}
& =\left(S^{G_{(x,A)}}_{i}(z_a), \left\{S^{H_a}_{i}(u) \mid u\in \Gamma(z_a)\right\}\right). \label{line2}
\intertext{The $i$-view of $z_a$ is the same in both graphs because its $(i-1)$-view is the same, the $(i-1)$-view of $z_x$ is the same and the $(i-1)$-view of every neighbor $u\neq z_x$ is the same in both graphs. Thus (\ref{line2}) equals}
& =\left(S^{H_a}_{i}(z_a), \left\{S^{H_a}_{i}(u) \mid u\in \Gamma(z_a)\right\}\right)\nonumber\\
& =S^{H_a}_{i+1}(z_a)=h_{i+1}(z^{r-i-1}(a)).\nonumber
\end{align}
\end{proof}
\noindent\textbf{Claim 2:} Every edge in $\NB_{r+1}$ is mapped to an edge in $\NCR$.
\begin{proof}[Proof of Claim 2]
Let $(x,A), (y,B)\in \NB_{r+1}$ with $x\in B$ and $y\in A$. Define $z_x:=z^{r}(x)\in \NB_0$ and $z_y:=z^{r}(y)\in \NB_0$. 
We need to show that there exists a graph with two neighboring nodes which have the views $h_{r+1}((x,A))$ and $h_{r+1}((y,B))$.

We combine the graphs $G_{(x,A)}$ and $G_{(y,B)}$ to obtain a graph $G$ in which their roots $z_x$ and $z_y$ are neighbors and views are corresponding. Let $\TG_{(x,A)}$ be the graph $G_{(x,A)}$ without the subgraph starting at a child of $z_x$ which has $r-1$-view $h_{r-1}(y)$ and let  $\TG_{(y,B)}$ be the graph $G_{(y,B)}$ without the subgraph starting at a child of $z_y$ which has $r-1$-view $h_{r-1}(x)$.
Then consider the following graph $G=(V(G), E(G))$:
\begin{eqnarray*}
V(G)& := &\{z_x, z_y\}\dcup V\left(\TG_{(x,A)}\right)\dcup V\left(\TG_{(y,B)}\right)\\
E(G) & := &\{\{z_x,z_y\}\}\dcup E\left(\TG_{(x,A)}\right)\dcup E\left(\TG_{(y,B)}\right).
\end{eqnarray*}

A proof which shows that the views of $z_x$ and $z_y$ are $h_{r+1}((x,A))$ and $h_{r+1}((y,B))$, respectively, is along similar lines as the proof of Claim 1.
\end{proof}
This concludes the proof of both claims and thus the induction step and the proof of Lemma \ref{lemma:graphHomo}.
\end{proof}

\begin{lemma}
\label{lemma:graphHomo2}
For $r\geq 0$ there is a  graph homomorphism 
\begin{align*}
f_r: \NT_r(m,D)\rightarrow \NB_r\left(m,(r+1)D\right)
\end{align*}
\end{lemma}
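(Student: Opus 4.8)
The plan is to construct $f_r$ by induction on $r$, where $f_{r+1}\big((x,A)\big)$ will have first coordinate $f_r(x)$ and second coordinate obtained from $\{\,f_r(a)\mid a\in A\,\}$ by \emph{padding} it with a few extra neighbours, just enough that the extra constraint $R(x)=z(A)$ in the definition of $\NB_{r+1}$ is met. The only genuinely non-routine ingredient is the following realizability fact about $\NB$, which I would establish first (harmlessly assuming $m\geq 2$, which always holds in our applications):

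\emph{Claim.} For every $r\geq 0$, every degree bound $D'$, every node $y\in\NB_r(m,D')$, and every type $t\in R(y)$, there is a neighbour $n$ of $y$ in $\NB_r(m,D')$ with $z(n)=t$.

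I would prove the Claim by induction on $r$. For $r=0$, every neighbour of $y$ in $K_m$ has centre $\bot$, the unique element of $R(y)$, and such a neighbour exists. For $r\geq 1$, write $y=(y',B')$, so $R(y)=B'$ and $t\in B'\subseteq\Gamma_{\NB_{r-1}}(y')$; in particular $t\sim y'$ in $\NB_{r-1}$, and reading off the edge relation of $\NB_{r-1}$ gives $z(y')\in R(t)$. Take $n:=(t,B_t)$ with $B_t:=\{y'\}\cup\{\,n_s\mid s\in R(t)\setminus\{z(y')\}\,\}$, where $n_s$ is a neighbour of $t$ with centre $s$ supplied by the induction hypothesis applied to $t\in\NB_{r-1}(m,D')$. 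Then $z(B_t)=R(t)$ (using $z(y')\in R(t)$), $|B_t|\leq|R(t)|\leq D'$, and $B_t\subseteq\Gamma_{\NB_{r-1}}(t)$, so $n$ is a node of $\NB_r(m,D')$; moreover $y'\in B_t$ together with $t\in B'$ witnesses the edge $n\sim y$. The recursion is essential here: the naive candidate $(t,\{y'\})$ is a legal node of $\NH$ or $\NT$ but in general violates the type constraint of $\NB$.

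Granting the Claim, I would prove Lemma~\ref{lemma:graphHomo2} by induction on $r$, starting from $f_0=\mathrm{id}\colon K_m\to K_m=\NB_0$. Given $(x,A)\in\NT_{r+1}(m,D)$, set $y:=f_r(x)\in\NB_r\big(m,(r+1)D\big)$ and $A_0:=\{\,f_r(a)\mid a\in A\,\}$. Since $f_r$ is a homomorphism, $A_0\subseteq\Gamma_{\NB_r}(y)$, and since every neighbour of $y$ in $\NB_r$ has its centre in $R(y)$ (again by the edge relation), $z(A_0)\subseteq R(y)$. For each missing type $t\in R(y)\setminus z(A_0)$ pick $n_t$ as in the Claim and put
\[ f_{r+1}\big((x,A)\big):=\big(\,y,\ A_0\cup\{\,n_t\mid t\in R(y)\setminus z(A_0)\,\}\,\big). \]
Writing $C$ for the second coordinate, one has $z(C)=R(y)$, $|C|\leq|A|+|R(y)|\leq D+(r+1)D=(r+2)D$, and $C\subseteq\Gamma_{\NB_r}(y)$; since $\NB_r(m,\cdot)$ embeds as an induced subgraph into its larger-degree-parameter versions, $(y,C)$ is a node of $\NB_{r+1}\big(m,(r+2)D\big)$. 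Edge preservation is immediate: if $x\in B$ and $y^{*}\in A$, then $f_r(y^{*})\in A_0\subseteq C$ and, symmetrically, $f_r(x)$ lies in the second coordinate of $f_{r+1}\big((y^{*},B)\big)$ (padding only adds vertices), which is exactly the edge relation of $\NB_{r+1}$.

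The main obstacle is the Claim — concretely, verifying that the constructed neighbour $(t,B_t)$ again satisfies $R(\cdot)=z(\cdot)$, which is what forces the recursive construction. Everything else is bookkeeping: the base cases, the fact that the degree bound grows by exactly $D$ per level (hence $(r+1)D$ after $r$ levels, a slightly lossy but adequate bound), and the routine observation that $\NT_r(m,\cdot)$ and $\NB_r(m,\cdot)$ are monotone induced subgraphs in the degree parameter, which is what lets the padded node live in the slightly larger graph $\NB_{r+1}$.
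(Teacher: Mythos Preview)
Your proof is correct and follows essentially the same inductive construction as the paper: map $(x,A)$ to $(f_r(x),\{f_r(a)\mid a\in A\}\cup A')$ where $A'$ pads the missing types, and bound $|A'|\leq |R(f_r(x))|\leq (r+1)D$ to get the degree growth of $D$ per level. The paper's sketch simply asserts that ``the type requirements can be fulfilled easily'' without further justification; your Claim (that every node of $\NB_r(m,D')$ has, for each of its types $t$, a neighbour with centre $t$) is precisely what is needed to substantiate this, and your recursive construction of such a neighbour is correct and is genuinely required --- the naive choice $(t,\{y'\})$ would indeed fail the type constraint of $\NB$. So your argument is the same as the paper's, but with this gap filled in.
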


\begin{proof}[Proof sketch]We show the result by induction on $r$. 

\noindent\textbf{Base case:} 
For $r=0$ the graphs $\NT_0(m,D)$ and $\NB_0(m,D)$ are both equal to the clique on the nodes $[m$] which implies the existence of $f_0$.

\noindent\textbf{Induction step:}
Assume that $f_r: \NT_r(m,D)\rightarrow \NB_r\left(m,(r+1)D\right)$ is given. We construct
\begin{align*}
f_{r+1}: \NT_{r+1}(m,D)\rightarrow \NB_{r+1}\left(m,(r+2)D\right).
\end{align*}
Let $(x,A)$ and $(y,B)$ be neighboring nodes in $\NT_{r+1}(m,D)$, i.e., $x\in B$ and $y\in A$. 
\begin{align*}
f_{r+1}((x,A)) & :=\left(f_r(x), A'\cup \bigcup_{a\in A}f_r(a)\right), &
f_{r+1}((y,B)) :=\left(f_r(y), B'\cup \bigcup_{b\in B}f_r(b)\right),
\intertext{
where  $A',B'\subseteq \Gamma_{\NB_r(m,(r+1)D)}(f_r(x))$ are such that they fill up the types of $f_r(x)$ (or $f_r(y)$) which are not already met by $\left(\bigcup_{a\in A}f_r(a)\right)$ (or $\left(\bigcup_{b\in B}f_r(b)\right)$).
More precisely $A'$ and $B'$ are such that}
z(A') & =R(f_r(x))\setminus \left(\bigcup_{a\in A}f_r(a)\right), & \left|A'\cup \left(\bigcup_{a\in A}f_r(a)\right)\right|\leq (r+2)D,  \\
z(B') & =R(f_r(y))\setminus \left(\bigcup_{b\in B}f_r(b)\right), & \left|B'\cup \left(\bigcup_{b\in B}f_r(b)\right)\right|\leq (r+2)D.
 \end{align*}
$A'$ and $B'$ can be chosen arbitrarily as long as they fulfill the above requirements. Note that the type requirements (the ones on the left hand side) can be fulfilled easily. To make the choice of $A'$ and $B'$ unique simply assume any linear order on $\NB_r(m,(r+1)D)$ and let $A'$ and $B'$ be the minimal sets which satisfy the above conditions.
Further note that the size restriction on the right hand side can be met because 
\[\left|A'\cup \left(\bigcup_{a\in A}f_r(a)\right)\right|\leq \left|R(f_r(x))\right|+|A|\leq (r+1)D+D\leq (r+2)D.\]
The reasoning is identical for $B'$.

Note that $f_{r+1}((x,A))$ and $f_{r+1}((y,B))$ are neighbors in $\NB_{r+1}(m,(r+2)D)$.
\end{proof}
The \SETLOCAL\ model is crucial in the proof of the homomorphism in Lemma \ref{lemma:graphHomo2}. The recursive construction of homomorphisms breaks when considering $\NH_r$ instead of $\NB_r$.
\subsection{Proof of Theorem \ref{thm:lowerbound}}
\label{sec:runtimeLowerBound}
Any graph homomorphism $f:G\rightarrow H$ implies that $\chi(G)\leq \chi(H)$ and we devised graph homomorphisms
\begin{align*}
\NT_r(m,D)\stackrel{f_r}{\longrightarrow} \NB_r(m,(r+1)D) \stackrel{h_r}{\longrightarrow} \NCR_r(m,(r+1)D) \stackrel{h}{\longrightarrow} \NCR_r(m,2rD).
\end{align*}
The existence of the last homomorphism is trivial and together with  Theorem \ref{theorem:lowerBoundNT} this implies
\begin{align}\label{cond:chromaticNumber}
\chi\left(\NCR_r(m,2rD)\right)> \frac{D^2}{4r}, ~\text{for } m\geq \frac{D^2}{4r}+\frac{D}{2}+1.
\end{align}
To prove Theorem \ref{thm:lowerbound} assume an $r$-round $(C \Delta^{1+\eta})$-coloring algorithm.
Set the parameter $D:=\big(2C\Delta^{2+\eta}\big)^{\frac{1}{3}}$. Then the condition on $m$ in (\ref{cond:chromaticNumber}) is satisfied for $\Delta$ sufficiently large. With $r:=\big(\frac{1}{16C}\Delta^{1-\eta}\big)^{\frac{1}{3}}$
statement (\ref{cond:chromaticNumber}) implies the contradiction
$\chi\big(\NCR_{r}(m,\Delta)\big)> C\Delta^{1+\eta}$.\hspace*{\fill}\qed
% \begin{align*}
% \chi\left(\NCR_{r}(m,\Delta)\right)> C\Delta^{1+\eta}.\tag*{\qed}
% \end{align*}

\subsection{Lower Bound for One-Round Defective-Coloring Algorithms}
\label{sec:LowerBoundWeakColoring}
In this section we show an $\Omega\big(\big(\frac{\Delta}{d+1}\big)^2\big)$ lower bound on the number of colors of a $d$-defective one-round coloring algorithm in the \LOCAL\ model starting with an initial $m$-coloring for sufficiently large $m$.
A $d$-defective $c$-coloring of a graph $G=(V,E)$ is a function $\phi: V\rightarrow [c]$ such that each color class induces a graph with maximum degree $d$, i.e., let $S_i=\{v\in V \mid \phi(v)=i\}$ then the induced graph $G[S_i]$ has maximum degree $d$. The $d$-defective chromatic number $\chi^d(G)$ of a graph $G$ is the smallest integer $c$ such that there is a $d$-defective $c$-coloring of $G$. Note that a $0$-defective $c$-coloring is a proper $c$-coloring in the usual sense.

Much of this section is similar to Section \ref{sec:lowerBoundProof} and in particular Section \ref{sec:OneRoundBound}. One can unify all definitions for those parts. However, to keep the proof of the main result in Section \ref{sec:lowerBoundProof} as simple as possible and to make Section \ref{sec:LowerBoundWeakColoring} self contained we repeat definitions in this chapter and slighlty adapt them to be suitable for defective colorings.

For integers $\Delta\geq 2$ and $m>\Delta$, we define the  
graph $\NH_1(m,\Delta)$ with
\begin{eqnarray*}
  V\left(\NH_1(m,\Delta)\right) & := &
  \set{(x,A) | x\in[m], A\sqsubseteq [m], |A|\leq \Delta, x\not\in A}
\end{eqnarray*}
	and there is an edge between two nodes $(x,A), (y,B)\in \NH_1(m,\Delta)$ if $x\in B \text{ and }y\in A$.
 
Similar to Lemma \ref{lemma:algCorrespondence} we obtain the following lemma.
\begin{lemma}\label{lemma:algDefectiveCorrespondence}
Any deterministic one-round distributed algorithm in the \LOCAL\ model, which correctly $d$-defectively $c$-colors any initially $m$-colored graph with maximum degree $\Delta$, yields a feasible $d$-defective $c$-coloring of $\NH_1(m,\Delta)$ and vice versa.
\end{lemma}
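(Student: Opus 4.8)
The plan is to mimic the proof of Lemma \ref{lemma:algCorrespondence} essentially verbatim, the only change being bookkeeping of the defect bound $d$ instead of a proper coloring. First I would recall that in the standard \LOCAL\ model a one-round algorithm at a node $v$ can be assumed (w.l.o.g.) to first collect its $1$-view and then output; since there are no unique IDs and a node cannot detect cycles, the relevant $1$-views are exactly the pairs $(x,A)$ with $x=\psi(v)\in[m]$ and $A\sqsubseteq[m]$ the \emph{multiset} of colors of the (at most $\Delta$) neighbors of $v$, with $x\notin A$ since $\psi$ is a proper coloring. This is precisely $V(\NH_1(m,\Delta))$, so any one-round $d$-defective $c$-coloring algorithm induces a map $\phi\colon V(\NH_1(m,\Delta))\to[c]$.

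For the forward direction, suppose the algorithm $d$-defectively $c$-colors every initially $m$-colored graph with maximum degree $\Delta$, and suppose for contradiction that the induced $\phi$ is \emph{not} a $d$-defective $c$-coloring of $\NH_1(m,\Delta)$, i.e.\ some color class $S_i=\phi^{-1}(i)$ contains a node $(x,A)$ with more than $d$ neighbors in $S_i$ inside $\NH_1$. Let $(y_1,B_1),\dots,(y_{d+1},B_{d+1})\in S_i$ be $d+1$ distinct neighbors of $(x,A)$; by the edge relation, $x\in B_j$ and $y_j\in A$ for each $j$. The key step is to realize this adjacency pattern in an actual graph: build a tree (a "star-of-views" gadget) with a center node $u$ coloured $x$ whose neighbourhood multiset of colours is $A$ — in particular $A$ must contain the colours $y_1,\dots,y_{d+1}$ with enough multiplicity — and attach to $u$, for each $j$, a node $w_j$ coloured $y_j$ whose own neighbourhood multiset of colours is $B_j$ (which contains $x$); the remaining required neighbours of $u$ and of each $w_j$ are filled in with fresh pendant subtrees carrying the appropriate colours, keeping all degrees at most $\Delta$ and the colouring proper, which is possible because each of $(x,A),(y_j,B_j)$ is by definition a feasible node of $\NH_1(m,\Delta)$. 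In this graph $u$ has $1$-view $(x,A)$ and each $w_j$ has $1$-view $(y_j,B_j)$, so the algorithm assigns colour $i$ to $u$ and to all $d+1$ nodes $w_1,\dots,w_{d+1}$, each adjacent to $u$; hence the colour class $i$ induces a subgraph in which $u$ has degree $\geq d+1$, contradicting $d$-defectiveness. The mild subtlety here, and the step I expect to be the main (though still routine) obstacle, is checking that the $d+1$ neighbours $(y_j,B_j)$ can be realized \emph{simultaneously} as distinct neighbours of a single center with a consistent proper colouring and degree bound — one must be slightly careful that distinct nodes of $\NH_1$ may share a center colour and that multiplicities in $A$ suffice; this is exactly the gadget construction already used implicitly for $\NCR_r$ and causes no real trouble.

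For the converse, assume $\phi$ is a $d$-defective $c$-colouring of $\NH_1(m,\Delta)$, known to all nodes (legitimate, since $\NH_1(m,\Delta)$ depends only on $m$ and $\Delta$). The algorithm is: each node $v$ collects its $1$-view, which corresponds to a unique node $x_v=(\psi(v),A_v)\in V(\NH_1(m,\Delta))$, and outputs $\phi(x_v)$. If $u$ and $v$ are adjacent in $G$ and $\phi(x_u)=\phi(x_v)$, then since $\psi(u)\in A_v$ and $\psi(v)\in A_u$ the nodes $x_u,x_v$ are adjacent in $\NH_1$ and lie in the same colour class $S_i$ of $\phi$; mapping $v\mapsto x_v$ sends the monochromatic neighbourhood of $v$ in $G$ into the monochromatic neighbourhood of $x_v$ in $\NH_1$, and I would observe this map is injective on $\Gamma_G(v)$ because distinct neighbours of $v$ with the same colour give the same $1$-view — wait, that is false, so instead I track multiplicities: the monochromatic neighbours of $v$ contribute to the multiset $A_v$, and the number of neighbours of $x_v$ in $S_i$, counted appropriately, is at least the number of monochromatic $G$-neighbours of $v$; since $\phi$ is $d$-defective this is at most $d$. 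Hence every colour class of the output induces maximum degree at most $d$ in $G$, so the algorithm is a one-round $d$-defective $c$-colouring algorithm. The only care needed is the multiplicity argument linking "neighbours of $v$ in $G$ of colour $i$" to "the count $i$ contributes in the multiset $A_v$", and thence to adjacency in $\NH_1$; since $\NH_1$ uses multisets precisely to record this information, the bound goes through. Combining the two directions gives the claimed equivalence.
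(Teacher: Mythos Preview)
Your plan mirrors the paper's proof, but the two steps you yourself flag as needing care are genuine gaps that your argument does not close. In the forward direction you need the multiset $A$ to contain $y_1,\dots,y_{d+1}$ with enough multiplicity to attach $d+1$ \emph{distinct} children to the centre while keeping its $1$-view equal to $(x,A)$; the edge relation of $\NH_1$ only guarantees each $y_j\in A$ with multiplicity $\geq 1$, so if two of the $(y_j,B_j)$ share a centre colour $y$ that appears only once in $A$, the tree cannot be built. Concretely, with $\Delta=2$, $d=1$, let $\phi$ assign the same colour to $(1,\{2,3\})$, $(2,\{1,3\})$, $(2,\{1,4\})$ and pairwise distinct colours to all other nodes: the induced one-round algorithm is $1$-defective on every graph (a node with output in this class has at most one neighbour whose view can also lie in the class), yet in $\NH_1$ the vertex $(1,\{2,3\})$ has two same-coloured neighbours. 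So ``causes no real trouble'' is incorrect.

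The converse has the dual problem, and your fallback ``multiplicity argument'' cannot rescue it: $\NH_1$ is a \emph{simple} graph, so $d$-defectiveness of $\phi$ bounds only the number of \emph{distinct} $\NH_1$-neighbours of $x_v$ in a colour class, while several $G$-neighbours of $v$ may share the same $1$-view and collapse to a single vertex of $\NH_1$. With $\Delta=2$, $d=1$, take $\phi\big((1,\{2,2\})\big)=\phi\big((2,\{1,3\})\big)$ and all other nodes distinct colours; this is $1$-defective on $\NH_1$, but on the path coloured $3,2,1,2,3$ the centre acquires two monochromatic neighbours. The paper's own proof is equally terse at exactly these two points and glosses over the same issue; for the lower-bound application (Theorem~\ref{thm:DefectiveoneRound}) only a suitably corrected forward implication is needed, but as a stand-alone equivalence the argument you outline has a real hole, not a routine one.
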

\begin{proof}
Assume that we are given a one-round distributed algorithm in the \LOCAL\ model, which $d$-defectively $c$-colors any initially $m$-colored graph with maximum degree $\Delta$. This algorithm induces a $c$-coloring $\phi$ of the nodes of $\NH_1(m,\Delta)$. Assume that this $c$-coloring has a defect larger than $d$, i.e., there are $d+2$ nodes $x_0,\ldots,x_{d+1}\in V(\NH_1(m,\Delta))$ with $\phi(x_0)=\ldots=\phi(x_{d+1})$ and $x_1,\ldots,x_{d+1}\in \Gamma(x_0)$. By the definition of $\NH_1(m,\Delta)$ we can construct an initially $m$-colored tree $G$ with maximum degree $\Delta$ where the one-round view $x_0$ occurs as the view of a node $v_0\in G$ and the one-round views $x_1,\ldots,x_{d+1}$ occur as one-round views of nodes $v_1,\ldots,v_{d+1}\in \Gamma_G(v)$.  As a consequence the algorithm corresponding to the function $\phi$ assigns the same value to $v_0,\ldots,v_{d+1}$, which is a contradiction to its correctness.

For the other direction we need to prove that any $r$-round $c$-coloring of $\NH_1(m,\Delta)$ implies a one-round $c$-coloring algorithm of any initially $m$-colored graph $G$ with maximum degree $\Delta$. Assume that we are given a network graph $G$ with initial $m$-coloring $\psi$. For that purpose assume that we have a $c$-coloring $\phi$ of $\NH_1(m,\Delta)$ which is known by all nodes of $G$. This is no problem because the graph $\NH_1(m,\Delta)$ does only depend on $m$ and $\Delta$ and is independent of the structure of any particular graph $G$. In one round let each node $v$ collect the set of its neighbors' colors $\Gamma_v=\{\psi(u) \mid \{v,u\}\in E(G)\}$. Then $(\psi(v),\Gamma_v)$ is a node of $\NH_1(m,\Delta)$. Let $v$ select color $\phi((\psi(v),\Gamma_v))$. If a node $v_0\in G$ has $d+1$ distinct neighbors $v_1,\ldots, v_{d+1}\in G$ their selected colors cannot be identical because the nodes  $(\psi(v_1),\Gamma_{v_1}), \ldots, (\psi(v_{d+1}),\Gamma_{v_{d+1}})$ are neighbors of $(\psi(v_0),\Gamma_{v_0})$ in  $\NH_1(m,\Delta)$.
\end{proof}
There are several problems with extending  Lemma \ref{lemma:algDefectiveCorrespondence} for more than one round.
For two or more rounds a $d$-defective coloring algorithm in the \LOCAL\ model does not (automatically) correspond to a $d$-defective coloring of the corresponding neighborhood graph: Neighbors of a node $x$ in $\NH_r$ with the same type can have the same color as $x$  and still only contribute for a single defect of the algorithm. 
In the \SETLOCAL\ model a $d$-defective coloring of $\NCR_r$ does not (automatically) imply a correct $d$-defective coloring algorithm. 
%We believe, but did not prove, that the only correct $d$-defective coloring algorithms in the \SETLOCAL\ model are proper coloring algorithms. Assume a  $c$-coloring algorithm $\mathfrak{A}$ in the \SETLOCAL\ model which $d$-defectively colors all graph with initial $m$-coloring and maximum degree $\Delta$. Le $G$ be a graph and $v$ a node in $G$ such that $d'$ with $0<d'\leq d$ neighbors of $v$ obtain the same color as $v$ under $\mathfrak{A}$. If $\Delta-deg(v)>d-d'$ then one can easily construct a graph (append trees at $v$ which see the same as the neighbors of $v$ with the same color) which the algorithm does not color with defect $d$.

Due to Lemma \ref{lemma:algDefectiveCorrespondence} a lower bound $\chi^d(\NH_1(m,\Delta)>c$ on the $d$-defective chromatic number implies that there is no one-round color reduction algorithm in the \LOCAL\ model which can (correctly)  $d$-defectively $c$-color all initially $m$-colored graphs with maximum degree $\Delta$.

Within this section we extend Definition \ref{def:source} to account for defective colorings.
\begin{definitions}[$d$-source]
	Let $d\geq 0$ be an integer, $I\subseteq V\big({\NH_1}\big)$ and $W\subseteq V\big(\NH_0(m,\Delta)\big)$.
	
	We call $x\in \NH_0(m,\Delta)$ a \emph{$(d, W)$-source} of $I$ if 
	\begin{align}
	\forall B\subseteq \Gamma(x)\cap W, ~|B|\leq d+1 ~\exists (x,A)\in I\text{ with } B\subseteq A.
	\end{align}
	\end{definitions}
	A $0$-source corresponds to Definition  \ref{def:source}. For a set $I\subseteq V\big({\NH_1}\big)$ define
	\[S^d_W(I):=\{x\in W \mid x \text{ is } (d,W)\text{-source of }I\}\]
and $S^d(I):=S_{\NH_0}^d(I)$.
		\begin{lemma}
	\label{lemma:sourcePropagation}
	For a set $I\subseteq V\big({\NH_1}\big)$ with  $\Delta(\NH_1[I])\leq d$ and $W\subseteq \NH_0$ the subgraph of $\NH_0$ induced by 
	$S^d_W(I)$ has maximum degree $d$.
\end{lemma}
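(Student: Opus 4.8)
The strategy is a short, direct contradiction argument — the one-round defective counterpart of the source-propagation reasoning in Section~\ref{sec:MainLB}. Intuitively, a $(d,W)$-source $x$ is forced to ``absorb'' every small subset of its $W$-neighbours into some node of $I$, and if there were more than $d+1$ such sources we could assemble from them a node of $\NH_1$ lying in $I$ with more than $d$ neighbours inside $\NH_1[I]$. So suppose the conclusion fails: the subgraph of $\NH_0$ induced by $S^d_W(I)$ has a vertex $x_0$ of degree at least $d+1$, and fix distinct $x_1,\dots,x_{d+1}\in S^d_W(I)$ that are neighbours of $x_0$ in this induced subgraph. Each $x_i$ lies in $W$ and is a $(d,W)$-source of $I$, and (as $\NH_0$ is undirected and loopless) each pair $\{x_0,x_j\}$ is an edge of $\NH_0$, hence $x_0\in\Gamma(x_j)$ and $x_j\in\Gamma(x_0)$ for all $j\ge 1$.

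First I would apply the $(d,W)$-source property of $x_0$ to the set $B_0:=\{x_1,\dots,x_{d+1}\}$: it is a subset of $\Gamma(x_0)\cap W$ of size exactly $d+1$, so there is a node $(x_0,A_0)\in I$ with $B_0\subseteq A_0$, where $x_0\notin A_0$ holds automatically since every $x_j$ is distinct from $x_0$. Symmetrically, for each $j\in\{1,\dots,d+1\}$ I would apply the $(d,W)$-source property of $x_j$ to the singleton $B_j:=\{x_0\}$, which is legitimate since $x_0\in\Gamma(x_j)\cap W$ and $|B_j|=1\le d+1$; this yields a node $(x_j,A_j)\in I$ with $x_0\in A_j$. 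Interpreting the membership $B\subseteq A$ for a multiset $A$ as the requirement that every element of $B$ occurs in $A$, all of $(x_0,A_0),(x_1,A_1),\dots,(x_{d+1},A_{d+1})$ are bona fide vertices of $\NH_1$ lying in $I$.

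The final step is to read off adjacency in $\NH_1$: for every $j\ge 1$ the edge rule of $\NH_1$ requires $x_0\in A_j$, which we arranged, and $x_j\in A_0$, which holds because $x_j\in B_0\subseteq A_0$; hence $(x_0,A_0)$ is adjacent to $(x_j,A_j)$. The $d+1$ nodes $(x_1,A_1),\dots,(x_{d+1},A_{d+1})$ are pairwise distinct (their first coordinates $x_j$ are distinct) and all belong to $I$, so $(x_0,A_0)\in I$ has at least $d+1$ neighbours inside $\NH_1[I]$, contradicting $\Delta(\NH_1[I])\le d$. I do not anticipate a genuine obstacle: this is a one-shot packing construction needing neither pigeonhole nor induction (unlike the multi-round Lemma~\ref{lemma:cliqueConstruction}), and the only points requiring care are purely bookkeeping — checking that the constructed pairs satisfy the definition of vertices of $\NH_1$ (notably $x_i\notin A_i$ and $|A_i|\le\Delta$, the latter being automatic since $(x_i,A_i)$ is supplied by the source property and hence already lies in $V(\NH_1)$), that distinct first coordinates yield distinct nodes, and that ``$B\subseteq A$'' is read in the multiset sense. (If $\Delta<d+1$ no such $x_0$ can exist at all, since then no $(d,W)$-source can absorb a $(d+1)$-subset of its neighbours; the argument subsumes this degenerate case.)
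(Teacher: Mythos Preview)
Your argument is correct and follows the same contradiction strategy as the paper: assume a vertex $x_0\in S^d_W(I)$ has $d+1$ neighbours $x_1,\dots,x_{d+1}$ in $S^d_W(I)$, use the $(d,W)$-source property to produce nodes $(x_i,A_i)\in I$, and observe that $(x_0,A_0)$ has $d+1$ neighbours in $\NH_1[I]$. The only cosmetic difference is that the paper applies the source property symmetrically---for every $i$ it takes $B_i=\{x_0,\dots,x_{d+1}\}\setminus\{x_i\}$ of size $d+1$---whereas you take the full set $B_0$ only for $x_0$ and singletons $B_j=\{x_0\}$ for the others; both choices give exactly the adjacencies needed, so this is not a genuinely different route.
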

\begin{proof}
Assume that the graph induced by $S^d_W(I)$ has degree at least $d+1$, i.e., there are $d+2$ distinct $(d,W)$-sources $x_0,\ldots,x_{d+1}$ in $W$ (note that $W\subseteq \NH_0$ is always a clique). For any $i=0,\ldots,d+1$ there is a node $x'_i:=(x_i,A_i)\in I$ with 
$\{x_0,\ldots,x_{d+1}\}\setminus \{x_i\}\subseteq A_i$ because $x_i$ is a $(d,W)$-source. Hence $x'_0$ is a neighbor of $x'_1, \ldots, x'_{d+1}$ in the graph $\NH_1[I]$, i.e., its degree is $d+1$, a contradiction.
\end{proof}

\begin{proof}[Proof of Theorem \ref{thm:DefectiveoneRound}]
 Assume for contradiction that we are given a $d$-defective  coloring of
  $\NH_1(m,\Delta)$ that uses at most $c= \frac{\Delta^2}{4(d+1)^2}$ colors. For
  each of the colors $k\in [c]$, the graph induced by the nodes $I_k\subseteq \NH_1$ colored with color
  $k$ has maximum degree $d$.

  Let $S\:=\bigcup_{k=1}^cS^d(I_k)$ be the set of nodes of $\NH_0=K_m$, which are
  a $(d,V(\NH_0))$-source of some color class. Note that due to Lemma \ref{lemma:sourcePropagation} $|S^d_W(I_k)|\leq d+1$ holds for each $k\in[c]$ and therefore $|S|\leq c(d+1)$. For the remainder of
  the proof, we restrict our attention to integers in
  $\bar{S}=[m]\setminus S$.
  We first fix an arbitrary set $T\subseteq \bar{S}$ of size
  $|T|=\lfloor \Delta/2 \rfloor +1$. Because $m\geq
  \frac{\Delta^2}{4(d+1)} + \Delta/2 + 1$, such a set $T$ exists. Due to Lemma \ref{lemma:sourcePropagation} and because $\NH_0[T]$ is a clique each color class can have at most $(d+1)$ distinct $(d,T)$-sources in $T$. By the pigeonhole principle there exists an $x\in T$ such that
  $x$ is a $(d,T)$-source for at most $c(d+1)/|T|$
  orientations. W.l.o.g., assume that $x\in T$ is a $(d,T)$-source 
  of color classes $I_1,\dots,I_q$, where $q\leq c(d+1)/|T|$.

  We now construct a node $(x,A)\in V\big({\NH_1}\big)$ that is not
  contained in any of the $c$ color classes $I_1,\dots,I_c$. We start by
  adding all $\lfloor \Delta/2\rfloor$ elements of $T\setminus
  \set{x}$ to $A$. Because $x$ is a $(d,T)$-source only of $I_1,\ldots, I_q$, none of the remaining $c-q$
  color classes can contain $(x,A)$. We have to add additional elements
  to $A$ in order to make sure that the color classes $I_1,\dots,I_q$
  do not contain $(x,A)$. As $T$ only consists of elements that are
  not $(d,V(\NH_0))$-sources of any of the color classes, for each color class $I_k$, $k\in[c]$, there is a set $B_k\subseteq[m]$, $|B_k|\leq d+1$ such that $(x,A)\notin I_k$ whenever $B_k\subseteq A$. For each of
  the color classes $I_k\in\set{I_1,\dots,I_q}$, we pick such a
  set $B_k$ and add $B_k$ to $A$. In this way, we obtain a pair
  $(x,A)$ that is not contained in any of the color classes
  $I_1,\dots,I_q$. The size of $A$ is
  \[  |A| \leq |T|-1 + q(d+1) \leq \left\lfloor \frac{\Delta}{2}\right\rfloor + 
  (d+1)^2\frac{c}{|T|}  < \frac{\Delta}{2} + \frac{\Delta^2/4}{\Delta/2} =
  \Delta  \]
  and thus, $(x,A)$ is a node of $\NH_1(m,\Delta)$ which is not contained in any color class, a contradiction.
\end{proof}
\clearpage
\section{Upper Bound in the \boldmath\SETLOCAL\ Model}
\label{sec:UpperBounds}
\subsubsection*{Linial's \boldmath$\bigO(\Delta^2)$-Coloring Algorithm in $\bigO(\logstar m)$ Rounds}
Linial's seminal $\bigO(\Delta^2)$-coloring algorithm (\cite{linial92}) which takes $\bigO(\logstar m)$ rounds does not require unique IDs but only an initial $m$-coloring. Let us quickly recall the algorithm to see that it works in the \SETLOCAL\ model as well: Depending on its current color, each node selects a set $F_0\subseteq[5\Delta^2\log m]$ of potential colors. After a single round of communication it learns the potential color sets $F_1,\ldots,F_{\Delta}$ of its neighbors. The sets are selected such that the set of non conflicting colors 
\begin{align*}
F_0\setminus \bigcup_{i=1}^{\Delta}F_i
\end{align*}
 is nonempty. An assignment of colors to sets in the range $[5\Delta^2\log m]$ with that property exists due to a purely combinatorial result by Erd\H{o}s et al. \cite{erdos82}. After a single round of communication each node selects one of its non-conflicting colors which yields a $5\Delta^2\log m$-coloring. The process is repeated for $\bigO(\logstar m)$ rounds (each time with a smaller $m$) until we obtain a $\bigO(\Delta^2\log\Delta)$-coloring. From there, a single additional round of the same kind suffices to directly get to $\bigO(\Delta^2)$ colors \cite{linial92}.

The \SETLOCAL\ model differs from the \LOCAL\ model only in terms of communication, that is, if two neighbors of a node $v\in V$ send the same message to $v$, $v$ will receive this message only a single time. In Linial's algorithm, to select a new color a node only needs to know the set of potential colors of its neighbors; in particular, a node's output does not change if two or more neighbors selected the same potential color set. Thus Linial's algorithm works without any modification in the \SETLOCAL\ model.

\subsubsection*{Kuhn-Wattenhofer Color Reduction Scheme}
The  color reduction scheme from \cite{Kuhn2006On} reduces an initial $m$-coloring to an $\big\lceil m\big(1-\frac{1}{\Delta+2}\big)\big\rceil$-coloring in a single communication round. Combining this with Linial's algorithm one obtains a $(\Delta+1)$-coloring algorithm with time complexity $\bigO(\Delta\log \Delta+\logstar(m))$. 
Let us take a look at a single round of the color reduction scheme as described in \cite{Kuhn2006On}: Assume that an $m$-coloring  is given and let $q$ be the desired number of colors of a new coloring. All nodes $v$ with a color smaller or equal to $q$ keep their color. Only nodes with one of the colors $q+1,\ldots,m$ need to choose a color which is smaller or equal to $q$. We number those colors from $x_0,\ldots, x_{t-1}$ where $t=m-q$. Then a node with color $x_i$ selects a new color from the range $R_i=\{i(\Delta+1)+1,\ldots,(i+1)(\Delta+1)\}$ which is different from the initial color of its neighbors. To actually obtain a $q$-coloring each color in the range $R_i$ needs to be smaller or equal to $q$, which implies the condition $q\geq m\big(1-\frac{1}{\Delta+2}\big)$.
 Nodes with different colors choose their colors from disjoint ranges; hence the obtained $q$-coloring is feasible and all nodes with a color greater than $q$ can choose their color at the same time, i.e., only one round of communication is needed.

In the above algorithm a node only needs to know its own color to determine its range $R_i$ and the set of colors of its neighbors to actually select one of the colors from this range. Its output does not depend on how many neighbors have a certain color, it is sufficient to know whether a color was selected by a neighbor at all. Hence the color reduction scheme works in the \SETLOCAL\ model without any modification.
\begin{proof}[Proof of Theorem \ref{thm:upperbound}]
The proof follows with the above arguments.
\end{proof}

\clearpage

\bibliography{references}

\end{document}